\newif\ifmydraft\mydraftfalse
\newif\ifanonymous\anonymousfalse
\newif\iffull\fulltrue
\newcommand{\souffle}{Souffl{\'e}\xspace}
\newcommand{\sep}{ \ensuremath{\!/\!} }
\newcommand{\stats}[4]{{#1}{#4}\sep{}{#2}{#4}\sep{}{#3}{#4}}
\newcommand{\speedups}[3]{\stats{#1}{#2}{#3}{\ensuremath{\times}}}
\newcommand{\hdr}[2]{%
\flushleft
\textbf{#1} \hfill {#2} \\
\centering
}
\newcommandx{\sidebyside}[5][1=.45,2=.45,3=b]{%
\begin{minipage}[#3]{#1\linewidth}
\centering
{#4}
\end{minipage}\hfill
\begin{minipage}[#3]{#2\linewidth}
\centering
{#5}
\end{minipage}%
}
\newcommandx{\threesidebyside}[7][1=.3,2=.3,3=.3,4=b]{%
\begin{minipage}[#4]{#1\linewidth}
\centering
{#5}
\end{minipage}\hfill
\begin{minipage}[#4]{#2\linewidth}
\centering
{#6}
\end{minipage}\hfill
\begin{minipage}[#4]{#3\linewidth}
\centering
{#7}
\end{minipage}%
}
\newcommandx{\foursidebyside}[8][1=.225,2=.225,3=.225,4=.225]{%
\begin{minipage}[b]{#1\linewidth}
\centering
{#5}
\end{minipage}\hfill
\begin{minipage}[b]{#2\linewidth}
\centering
{#6}
\end{minipage}\hfill
\begin{minipage}[b]{#3\linewidth}
\centering
{#7}
\end{minipage}\hfill
\begin{minipage}[b]{#4\linewidth}
\centering
  {#8}
\end{minipage}%
}
\theoremstyle{definition}
\newtheorem{axiom}{Axiom}
\newcommand{\cc}{C\texttt{++}\xspace}
\newcommand{\many}[1]{\ensuremath{\mathbf{#1}}}
\newcommand{\dl}[1]{\ensuremath{{#1}^{\mathrm{dl}}}}
\newcommand{\cpp}[1]{\ensuremath{{#1}^{\mathrm{cpp}}}}
\newcommand{\flg}[1]{\ensuremath{{#1}^{\mathrm{flg}}}}
\newcommand{\BNFALT}{\;\;|\;\;}
\newcommand{\prog}{\ensuremath{\mathrm{prog}}}
\newcommand{\horn}{\ensuremath{\mathrel{\mathsf{\mathord{:}\mathord{-}}}}}
\newcommand{\noteq}{\ensuremath{\mathrel{\mathsf{\mathord{!}\mathord{=}}}}}
\newcommand{\match}[2]{\ensuremath{\mathsf{match} ~ {#1} ~ \mathsf{with} ~ {#2}}}
\newcommand{\letin}[3]{\ensuremath{\mathsf{let} ~ {#1} = {#2} ~ \mathsf{in} ~ {#3}}}
\newcommand{\ite}[3]{\ensuremath{\mathsf{if} ~ {#1} ~ \mathsf{then} ~ {#2} ~ \mathsf{else} ~ {#3}}}
\newcommand{\transl}[1]{\ensuremath{\llbracket{#1}\rrbracket}}
\newcommand{\concat}{\ensuremath{\texttt{++}}}
\newcommand{\bm}[1]{#1}
\newcommand{\happensbeforesym}{\ensuremath{\rightarrow}\xspace}
\newcommand{\happensbefore}[2]{\ensuremath{{#1}\happensbeforesym{#2}}}
\newcommand{\icfoot}[1]{\text{\lstinline[{basicstyle=\footnotesize\ttfamily\upshape}];#1;}}
\newlist{proofcases}{itemize}{3}
\setlist[proofcases]{leftmargin=2.25cm,style=sameline,parsep=3pt plus 1pt minus 1pt}
\begin{document}

\title{Making Formulog Fast: An Argument for Unconventional Datalog Evaluation}
\iffull
\subtitle{Extended Version}
\subtitlenote{This article extends one published in PACMPL~\citep{Bembenek2024Making} with technical appendices.}\fi       %

\author{Aaron Bembenek}
\orcid{0000-0002-3677-701X}
\affiliation{%
  \institution{University of Melbourne}
  \city{Parkville}
  \country{Australia}
}
\email{aaron.bembenek@unimelb.edu.au}

\author{Michael Greenberg}
\orcid{0000-0003-0014-7670}
\affiliation{%
  \institution{Stevens Institute of Technology}
  \city{Hoboken}
  \country{USA}
}
\email{michael@greenberg.science}

\author{Stephen Chong}
\orcid{0000-0002-6734-5383}
\affiliation{%
  \institution{Harvard University}
  \city{Cambridge}
  \country{USA}
}
\email{chong@seas.harvard.edu}

\begin{abstract}
  With its combination of Datalog, SMT solving, and functional programming, the language Formulog provides an appealing mix of features for implementing SMT-based static analyses (e.g., refinement type checking, symbolic execution) in a natural, declarative way.
  At the same time, the performance of its custom Datalog solver can be an impediment to using Formulog beyond prototyping---a common problem for Datalog variants that aspire to solve large problem instances.
  In this work we speed up Formulog evaluation, with some surprising results: while 2.2$\times$ speedups can be obtained by using the conventional techniques for high-performance Datalog (e.g., compilation, specialized data structures), the big wins come by abandoning the central assumption in modern performant Datalog engines, semi-naive Datalog evaluation.
  In the place of semi-naive evaluation, we develop eager evaluation, a concurrent Datalog evaluation algorithm that explores the logical inference space via a depth-first traversal order.
  In practice, eager evaluation leads to an advantageous distribution of Formulog's SMT workload to external SMT solvers and improved SMT solving times: our eager evaluation extensions to the Formulog interpreter and \souffle's code generator achieve mean 5.2$\times$ and 7.6$\times$ speedups, respectively, over the optimized code generated by off-the-shelf \souffle on SMT-heavy Formulog benchmarks.

  All in all, using compilation and eager evaluation (as appropriate), Formulog implementations of refinement type checking, bottom-up pointer analysis, and symbolic execution achieve speedups on 20 out of 23 benchmarks over previously published, hand-tuned analyses written in F$^\sharp$, Java, and \cc, providing strong evidence that Formulog can be the basis of a realistic platform for SMT-based static analysis.
  Moreover, our experience adds nuance to the conventional wisdom that traditional semi-naive evaluation is the one-size-fits-all best Datalog evaluation algorithm for static analysis workloads.

\end{abstract}

\begin{CCSXML}
  <ccs2012>
     <concept>
         <concept_id>10011007.10011006.10011008.10011009.10011015</concept_id>
         <concept_desc>Software and its engineering~Constraint and logic languages</concept_desc>
         <concept_significance>500</concept_significance>
         </concept>
     <concept>
         <concept_id>10011007.10010940.10010992.10010998.10011000</concept_id>
         <concept_desc>Software and its engineering~Automated static analysis</concept_desc>
         <concept_significance>300</concept_significance>
         </concept>
     <concept>
         <concept_id>10010147.10010169.10010170.10010171</concept_id>
         <concept_desc>Computing methodologies~Shared memory algorithms</concept_desc>
         <concept_significance>300</concept_significance>
         </concept>
  </ccs2012>
\end{CCSXML}
  
  \ccsdesc[500]{Software and its engineering~Constraint and logic languages}
  \ccsdesc[300]{Software and its engineering~Automated static analysis}
  \ccsdesc[300]{Computing methodologies~Shared memory algorithms}

\keywords{Datalog, SMT solving, Formulog, compilation, parallel evaluation}

\maketitle

\section{Introduction}

Through combining Datalog, SMT solving, and functional programming, Formulog~\cite{formulog} aims to be a domain-specific language for declaratively implementing SMT-based static analyses such as refinement type checking and symbolic execution.
Formulog-based analyses achieve speedups over (non-Datalog) reference implementations on several case studies thanks to high-level optimizations like automatic parallelization and goal-directed evaluation; nonetheless, the prototype Formulog interpreter is not a heavily optimized system, and on pure Datalog programs it can be $\sim\!7\times$ slower than the industrial-strength Datalog solver \souffle~\citep{souffle1,souffle2}.
This performance gap is hardly surprising, as building a performant Datalog system is a challenging endeavor---involving multicore programming, specialized data structures, and subtle engineering---and \souffle has benefited from multiple person-years of engineering investment, as well as industry support.

The engineering gap between Formulog and an optimized Datalog system like \souffle has limited Formulog's use in practice.
For example, \citet{smaragdakis2021symbolic} conclude that Formulog's custom Datalog engine will not scale to the problems targeted by their novel ``symvalic'' analysis; this is despite the fact that Formulog has the right language features, and ``an evolved implementation [of Formulog] that will seamlessly combine Datalog rules and symbolic reasoning with high performance will be an ideal platform for symvalic analysis in the future.''
Formulog may have been a success from the perspective of language design, but its potential is only partially realized if the performance of its prototype implementation prevents interested clients from using it.

Moreover, performance concerns do not apply solely to Formulog: indeed, any novel Datalog variant that aspires to solve industrial-size problem instances faces the challenge of closing the sizable gap between research artifact and high-performance system.

\smallskip

Fortunately, over the past decade, a recipe has emerged for fast and scalable Datalog evaluation, based on techniques pioneered by \souffle: compile the Datalog program to imperative code that specializes the semi-naive Datalog evaluation algorithm~\citep{bancilhon1986naive} to that source program~\citep{souffle1,souffle2}, and link in optimized data structures~\citep{souffleindex,soufflebtree,souffletrie,Jordan2022Specializing}.
This is the general approach taken by recent high-performance Datalog engines (beyond \souffle) that target static analysis applications.
For example, Ascent~\citep{ascent} uses macros to compile embedded Datalog code to Rust code implementing semi-naive evaluation for that program, and then \textsc{Byods}~\citep{Sahebolamri2023Bring} adds the ability for users to link in custom data structures ideal for the particular workload; Flan~\citep{Abeysinghe2024Flan} uses a metaprogramming framework to specialize both a semi-naive Datalog interpreter and the data structures used by the interpreter to the Datalog program being evaluated.

\citet{pacak2022functional} sensibly propose ``frontend compilation'' as an approach to making novel Datalog-inspired languages performant: compile them to existing high-performance Datalog engines that already implement the traditional recipe for fast and scalable Datalog evaluation.
They contrast frontend compilation with two existing classes of approaches to making Datalog systems more expressive or usable: ``frontend-first'' approaches, which start by adding language features to Datalog and then build a custom backend to support them (here they explicitly use Formulog as an example), and ``backend-first'' approaches, which start with existing Datalog infrastructure and add new features on top of it.
They argue that frontend compilation achieves the best of both worlds, by allowing expressive frontend design, while reusing existing Datalog infrastructure.

Our first attempt to speed up Formulog evaluation follows the frontend compilation approach of \citet{pacak2022functional}.
We develop a compiler from the Datalog fragment of Formulog to \souffle, which in turn generates \cc code; this code is linked with the \cc code generated directly by our compiler for the non-Datalog features of Formulog (the functional fragment and SMT runtime).
The generated code is arithmetic mean $2.2\times$ faster than the baseline Formulog interpreter on Formulog benchmarks (min \sep median \sep max: \speedups{1.0}{1.7}{4.3}); the approach works particularly well for programs (like a bottom-up Java pointer analysis~\cite{bottomupPointsto}) where most computational work happens in the Datalog fragment (arith mean \sep min \sep median \sep max: 3.2$\times$ \sep 1.3$\times$ \sep 3.4$\times$ \sep 4.3$\times$).

\smallskip

Given the engineering effort put into \souffle, this speedup is not surprising; what is surprising is that one can do better, with little engineering effort.
With the addition of just 200 lines of code, the prototype Formulog interpreter beats the optimized \cc code generated by \souffle with an arithmetic mean 5.2$\times$ speedup on non-trivial SMT-heavy benchmarks (min \sep median \sep max: \speedups{0.41}{1.2}{31}), such as sophisticated refinement type checking~\cite{dminor} and KLEE-style symbolic execution~\cite{klee}, where there are up to hundreds of thousands of SMT calls and tens of millions of derived tuples.
The code generated by \souffle has many potential advantages over the Formulog interpreter (which is written in Java and uses relatively naive data structures);
the interpreter's trick is to abandon one of the key assumptions underlying modern high-performance Datalog engines, semi-naive evaluation.

Semi-naive evaluation~\cite{bancilhon1986naive} has been the standard Datalog evaluation algorithm for decades, and is the backbone of virtually all modern Datalog engines (especially those that target static analysis applications)~\citep{Ketsman2022Modern}.
Semi-naive evaluation cuts down on the number of redundant inferences made while evaluating Datalog rules, but it has a crucial limitation when it comes to Formulog evaluation: it enforces a breadth-first search (BFS) over the logical inference space, deriving inferences of proof height one, followed by inferences of proof height two, three, etc.
In the context of Formulog, the order of logical inferences determines how the SMT queries that arise in the course of Datalog rule evaluation are distributed amongst external SMT solvers; thus, improved Formulog performance can be achieved by making inferences in an order that leads to faster external SMT solving (whether thanks to improved opportunities for incremental SMT solving \citep{Een2003Temporal,formulog_incremental_smt} or just a more favorable distribution of the SMT workload).
Here, the BFS of semi-naive evaluation can be suboptimal.

In place of traditional semi-naive evaluation, we develop {\em eager evaluation}, a Datalog evaluation algorithm that uses work-stealing-based parallelism~\citep{mohr1990lazy,arora1998thread,blumofe1999scheduling} to perform a quasi-depth-first search (DFS) of the logical inference space.%
\footnote{
  The original Formulog paper~\citep{formulog} uses a preliminary version of eager evaluation for one of the case studies; however, it does not describe it nor evaluate it with respect to semi-naive evaluation. This paper fills those gaps.
}
Instead of batching work into explicit rounds of evaluation (as in semi-naive evaluation), eager evaluation eagerly pursues the consequences of the most recent derivations.
While this lack of batching means that eager evaluation can sometimes make more redundant derivations than semi-naive evaluation, in practice it is a good evaluation algorithm for Formulog: eager evaluation leads to advantageous distributions of the SMT workload across external SMT solvers and thus less time spent in SMT solving, a key determiner of overall performance for SMT-heavy Formulog programs.

We extend \souffle with proof-of-concept support for generating \cc code that performs eager evaluation.
On SMT-heavy benchmarks, the generated code achieves an arithmetic mean 1.8$\times$ speedup (min\sep{}median\sep{}max: \speedups{0.93}{1.3}{4.1}) over the Formulog interpreter running in eager evaluation mode.
This represents an arithmetic mean 8.8$\times$ speedup over the baseline Formulog interpreter (min\sep{}median\sep{}max: \speedups{0.73}{4.8}{38}).
As our modifications to the \souffle codebase amount to merely 500 additional lines of code, our experience demonstrates the feasibility of using existing infrastructure to build Datalog engines that perform eager evaluation.
This is a key strength of the algorithm, as a novel Datalog evaluation technique is unlikely to be used in practice if it is incompatible with the substantial research and sophisticated, labor-intensive engineering that has gone into making existing Datalog systems like \souffle performant.

\begin{table}
  \caption{
    Compilation and eager evaluation lead to substantial arithmetic mean speedups over the baseline Formulog interpreter.
    The headings ``+eager (interpret)'' and ``+eager (compile)'' denote using interpreted and compiled eager evaluation modes, respectively, for SMT-heavy benchmarks, and the base compiler (to off-the-shelf \souffle) for the other benchmarks.
    Eager evaluation is key to getting top performance on SMT-heavy benchmarks, where {\em interpreted} eager evaluation actually beats {\em compiled} semi-naive code.
  }\label{tab:summary}
  \begin{tabular}{lrrr}
    \toprule
    Benchmarks     & Base compiler & +eager (interpret) & +eager (compile) \\
    \midrule
    SMT-heavy (13) & 1.5$\times$   & 5.7$\times$        & 8.8$\times$      \\
    All (23)       & 2.2$\times$   & 4.6$\times$        & 6.4$\times$      \\
    \bottomrule
  \end{tabular}
\end{table}

\smallskip

Overall, our compiler from Formulog to \souffle---set to use our eager evaluation extension for SMT-heavy benchmarks---provides an arithmetic mean speedup of 6.4$\times$ over the baseline Formulog interpreter (min\sep{}median\sep{}max: \speedups{0.74}{3.4}{38}; see summary in Table~\ref{tab:summary}), and beats all interpreter modes on 22 out of 23 benchmarks.
This translates to competitive performance relative to previously published, hand-tuned analyses written in F$^\sharp$, Java, and \cc: on 20 out of 23 benchmarks, some Formulog mode is the fastest; on synthetic symbolic execution benchmarks, compiled Formulog's eager evaluation mode achieves an arithmetic mean 100$\times$ speedup (min\sep{}median\sep{}max: \speedups{7.0}{12}{870}) over the symbolic execution tool KLEE~\citep{klee}.
These results provide strong evidence that Formulog can be the basis of a realistic platform for SMT-based static analysis.

\subsection{Impact Beyond Formulog}

While we focus on Formulog, our work has broader impact along three main dimensions.

\paragraph{We push the frontiers of Datalog for static analysis}
Our work helps redefine the boundary of what is known to be feasible for Datalog-based static analysis, an active area of PL research that strives to make it possible to write analyses that are declarative (i.e., at the level of mathematical specifications) \emph{and} can be efficiently evaluated.
The original Formulog paper~\cite{formulog} primarily answers an expressivity-focused research question: ``Can SMT-based analyses like symbolic execution and refinement type checking be naturally expressed in a Datalog-like language?''
Our paper answers equally important performance-focused research questions: ``Can SMT-based analyses like symbolic execution and refinement type checking be efficiently evaluated in a Datalog-like language? Are existing ways of speeding up Datalog evaluation sufficient, or are new approaches necessary?''
In answering these questions---and by proposing a targeted evaluation algorithm that can be easily integrated with existing high-performance Datalog infrastructure---we give good evidence for the practicality of developing SAT/SMT-based analyses in Datalog, a research direction that is of interest to other (non-Formulog) declarative static analysis projects~\cite{saturn,Abeysinghe2024Flan,smaragdakis2021symbolic}.

\paragraph{We highlight how Datalog variants can be sensitive to logical inference order}
Other Datalog variants that interact with stateful external systems might benefit from evaluation modes, like eager evaluation, that explore the space of logical inferences in an order that is advantageous from the perspective of those external systems.
For example, a Datalog system that interacts with an external database that does not fit in memory would want to avoid inducing irregular database access patterns that lead to disk thrashing;
a Datalog system that interacts with AI chatbots (cf. Vieira~\cite{Li2024Relational}) might get better quality responses if the sequence of prompts sent to a chatbot forms a coherent line of conversation.

\paragraph{We question the conventional wisdom on speeding up Datalog}
Our experiences add some nuance to the conventional narrative of how to make non-distributed Datalog variants fast.
To this point, the suggestion has been that the combination of compilation, specialized data structures, and semi-naive evaluation is sufficient (perhaps with some additional attention to join orders and join algorithms). 
Our work demonstrates that this mindset runs the risk of missing out on performance improvements coming from the choice of evaluation algorithm itself, and that the ``frontend compilation'' approach of \citet{pacak2022functional}---while a reasonable starting point---can ultimately provide only limited performance gains if existing Datalog systems do not anticipate the potentially idiosyncratic workloads of the Datalog variant being compiled.
Over time, Datalog has moved wildly beyond its origins as a database query language, with applications in such diverse domains as static analysis~\citep{bddbddbUsing,bddbddbPointer,doop,datalogdisassembly,gigahorse,securify,madmax,smaragdakis2021symbolic,szabo2021incremental,inca}, networking~\citep{declarativenetworking,ryzhyk2019}, distributed systems~\citep{alvaro2010boom,alvaro2010dedalus}, access control~\citep{li2003datalog,dougherty2006specifying}, big data analytics~\citep{Shkapsky2016Big,Seo2013SociaLite}, and neurosymbolic AI~\citep{Huang2021Scallop,Li2023Scallop,Li2024Relational}.
As the applications of Datalog continue to evolve---whether to entirely new domains, or simply to new forms in existing domains (e.g., SMT-based static analyses)---we should keep in mind the possibility of developing novel Datalog evaluation algorithms to better fit new workloads.
A benefit of a declarative language like Datalog is that it is amenable to many different evaluation techniques; our work demonstrates one way that Datalog system designers can take advantage of this flexibility.

\subsection{Contributions}

At a high level, our paper explores how to efficiently execute SMT-based program analyses like symbolic execution and refinement type checking in a Datalog-like language.
The Formulog versions of these analyses differ from the types of analyses traditionally written in Datalog, in form (due to the extensive use of functional code), in the analysis logic they implement (thanks to SMT solving), and in their workloads (as SMT solving can be the most computationally intensive part).
By developing practical techniques to speed up these Datalog-based analyses, and showing that the Formulog versions can be competitive with---and sometimes much faster than---non-Datalog versions implemented in conventional languages, our work contributes to the larger research project of declarative, Datalog-based static analysis.
More concretely, we make the following contributions:
\begin{itemize}
  \item We design and evaluate a compiler from Formulog to off-the-shelf \souffle (Section~\ref{sec:compiler}).
  On the Formulog benchmark suite, the code produced by the compiler achieves an arithmetic mean 2.2$\times$ speedup over the Formulog interpreter's semi-naive evaluation mode.
  Through this compiler from Formulog to \souffle, we evaluate how well the recent ``frontend compilation'' proposal of \citet{pacak2022functional} works for Formulog.
  \item We observe that---unlike for traditional Datalog workloads---the \emph{order} in which logical inferences are made affects Formulog performance by determining the distribution of the SMT workload across threads.
        This inspires a novel ``eager'' strategy for parallel Datalog evaluation that eschews the batching of semi-naive evaluation in favor of eagerly pursuing the consequences of logical derivations with the help of a work-stealing thread pool (Section~\ref{sec:eagereval}).
  \item We extend the Formulog interpreter and \souffle's code generator to support eager evaluation, leading to arithmetic mean speedups of 5.2$\times$ and 7.6$\times$, respectively, over using off-the-shelf \souffle on SMT-heavy Formulog benchmarks; doing so demonstrates that eager evaluation is a practical and effective approach consistent with existing Datalog infrastructure (Section~\ref{sec:eagerimpl}).
  \item We reassess Formulog's performance in a wider context---showing that Formulog-based analyses can be competitive with previously published, hand-tuned, non-Datalog implementations of a range of SMT-based analyses---and suggest future improvements (Section~\ref{sec:discussion}).
\end{itemize}

\section{Background}\label{sec:background}

We overview the basics of Datalog (Section~\ref{sec:datalog}), \souffle (Section~\ref{sec:souffle}), and Formulog (Section~\ref{sec:formulog}). Throughout, we \textbf{boldface} a metavariable to indicate zero or more repetitions.

\subsection{Datalog}\label{sec:datalog}

\begin{figure}[t!]
  \begin{minipage}{1.0\linewidth}
    \sidebyside[0.63][0.36][t]{
      \[\begin{array}{@{}l@{}rcl@{}}
          \multicolumn{4}{l}{\textbf{Constructs}}                                                                    \\ \hline
          \text{Programs}     & \dl{\prog} & ::= & \many{\dl{H}}                                                     \\
          \text{Horn clauses} & \dl{H}     & ::= & p(\many{t}) \horn \many{\dl{A}}                                   \\
          \text{Atoms}        & \dl{A}     & ::= & p(\many{t}) \BNFALT !p(\many{t}) \BNFALT t = t \BNFALT t \noteq t \\
          \text{Terms}        & t          & ::= & X \BNFALT n \BNFALT @\cpp{f}(\many{t})
        \end{array}\]
    }{
      \[\begin{array}{@{}l@{}rcl@{}}
          \multicolumn{4}{l}{\textbf{Namespaces}}                  \\ \hline
          \text{Variables}     & X        & \in & \mathrm{Var}     \\
          \text{Integers}      & n        & \in & \mathbb{Z}       \\
          \text{Predicates}    & p        & \in & \mathrm{PredVar} \\
          \text{\cc functions} & ~\cpp{f} & \in & \mathrm{CppVar}
        \end{array}\]
    }
  \end{minipage}
  \caption{A grammar distilling the standard features of Datalog; we model \souffle by adding \emph{functor calls} $@\cpp{f}(\many{t})$, which are FFI calls invoking external \cc functions.}
  \label{fig:datalog}
\end{figure}

A Datalog~\citep{gallaire1978logic,neverdared,recursivequery} program is a set of Horn clauses (Figure~\ref{fig:datalog}), where each Horn clause is in the form $p(\many{t}) \horn \many{\dl{A}}$.
A Horn clause can be interpreted as a logical implication: the {\em head} predicate $p(\many{t})$ holds if all the {\em body} atoms $\many{\dl{A}}$ are true.
At a high level, Datalog evaluation amounts to making all logical inferences justified by the implications in the program, where each Horn clause is an inference rule in which the body atoms are the premises and the head predicate is the conclusion:
\[ p(t_1, \dots, t_m) \horn \dl{A}_1, \dots, \dl{A}_n \qquad \Longleftrightarrow \qquad \frac{\dl{A}_1 \quad \dots \quad \dl{A}_n}{p(t_1, \dots, t_m)} \]

An atom \dl{A} has one of four forms.
In addition to the standard \emph{positive predicates} $p(\many{t})$, Datalog typically supports \emph{negative predicates} $!p(\many{t})$.
A negative predicate holds when it cannot be derived by the program ({\em negation as failure}~\citep{clark1978negation}).
\emph{Unification predicates} $t = t$ unify their two arguments, failing if the unification is impossible.
\emph{Inequality predicates} $t \noteq t$ hold when their arguments are not equal.
All these predicates are defined over \emph{terms}, $t$, which are either variables $X$ or constants; here we assume, without loss of generality, that each constant is an integer $n$.
(We discuss \souffle's foreign function calls, $@\cpp{f}(\many{t})$, below.)

The predicate symbols in a Datalog program can be split into \emph{strata}, so that the predicate symbol of the head of a rule is not in a lower stratum than any predicate symbol occurring in its body.
By splitting a Datalog program into parts and putting a dependency order on them, stratification enables an intuitive and easily computable meaning for negated predicates in programs that meet the requirements of {\em stratified negation}~\citep{apt1988towards,przymusinski1988declarative,vangelder1989negation}.
The standard for Datalog, stratified negation sidesteps semantic and computational complications that arise when negation occurs within recursion (cf. the stable model semantics~\citep{stablemodel}), by requiring a predicate symbol $p$ to be in a strictly higher stratum than predicate symbol $q$ if there is a rule with predicate $p(\many{t})$ in the head and negated predicate $!q(\many{t}')$ in the body.

Datalog evaluation typically proceeds as a sequence of least-fixpoint computations, computing all the tuples in the predicates at the lowest stratum and working upwards, where each stratum is solved using some form of bottom-up evaluation.
For example, consider a stratum computing graph transitive closure, in which there is a non-recursive relation $\mathsf{edge}$ storing the edges of the graph (the input to the stratum) and a recursive relation $\mathsf{reach}$ defining their transitive closure:
\begin{lstlisting}
reach(X, Y) :- edge(X, Y).                       // rule trans_edge
reach(X, Z) :- edge(X, Y), reach(Y, Z).          // rule trans_step
\end{lstlisting}
{\em Naive evaluation} repeatedly runs these inference rules until no new $\mathsf{reach}$ tuples are derived (i.e., a fixpoint is reached).
Running \emph{all} the rules involves significant redundant computation---each iteration re-derives all tuples derived by previous iterations. The de facto standard technique of {\em semi-naive evaluation}~\citep{bancilhon1986naive} avoids some of this redundancy by executing Horn clauses rewritten to use auxiliary relations indexed by the iteration of the fixpoint computation:
\begin{lstlisting}
$\Delta$(reach)$^{[1]}$(X, Y) :- edge(X, Y).
$\Delta$(reach)$^{[i]}$(X, Z) :- edge(X, Y), $\delta$(reach)$^{[i-1]}$(Y, Z).
\end{lstlisting}
A predicate symbol $p^{[i]}$ signifies relation $p$ at the beginning of step $i$, and $p^{[i+1]} = \Delta(p)^{i} \cup p^{[i]}$ and $\delta(p)^{[i]} = \Delta(p)^{[i]} - p^{[i]}$.
Intuitively, the relation $\delta(p)^{[i]}$ is the set of new tuples derived in iteration $i$, and the relation $\Delta(p)^{[i]}$ is an overapproximation of that set.
We have reached a fixpoint for a stratum  once all of its $\delta$-relations are empty, i.e., we can learn nothing more.
Auxiliary relations are used only for predicates that are recursive in the current stratum; other relations are used directly.

Semi-naive evaluation is more complex in the presence of non-linearly recursive rules, where a single rule is rewritten into multiple semi-naive rules referring to different auxiliary relations.
Consider the nonlinearly recursive reformulation of rule \lstinline;trans_step;:
\begin{lstlisting}
reach(X, Z) :- reach(X, Y), reach(Y, Z).        // rule trans_guess
\end{lstlisting}
The Datalog engine rewrites \lstinline;trans_guess; into {\em two} rules, one per occurrence of a recursive predicate:
\begin{lstlisting}
$\Delta$(reach)$^{[i]}$(X, Z) :- $\delta$(reach)$^{[i-1]}$(X, Y), reach$^{[i]}$(Y, Z).
$\Delta$(reach)$^{[i]}$(X, Z) :- reach$^{[i]}$(X, Y), $\delta$(reach)$^{[i-1]}$(Y, Z).
\end{lstlisting}
Note that each rule accesses the \lstinline;reach; relation as it stood at the beginning of that iteration.%
\footnote{
  Technically, the first rule can have \icfoot{reach$^{[i-1]}$(Y, Z)} instead of \icfoot{reach$^{[i]}$(Y, Z)}, or alternatively the second rule can have \icfoot{reach$^{[i-1]}$(X, Y)} instead of \icfoot{reach$^{[i]}$(X, Y)}.
  This optimization materializes fewer combinations of tuples, since \icfoot{reach$^{[i-1]} \subseteq$ reach$^{[i]}$} (by the monotonicity of Datalog).
  However, neither \souffle nor Formulog uses it.
}
Readers looking to learn more can consult Datalog surveys~\citep{neverdared,recursivequery}.

\subsection{\souffle}\label{sec:souffle}

\souffle~\citep{souffle1,souffle2} is a leading Datalog implementation designed specifically for static analysis workloads.
It has many features; we focus on those relevant to this paper.
Foremost among these, \souffle is itself a compiler: it compiles Datalog programs to an intermediate representation---instructions on a notional relational algebra machine (RAM)---and then compiles these instructions to \cc code.
The resulting code can be linked against external \cc functions, which can be invoked in \souffle source code as \emph{functor calls}, written $@\cpp{f}(\many{t})$ (Figure~\ref{fig:datalog}).

A Horn clause is typically compiled into nested \lstinline;for; loops.
For example, the semi-naive version of the \lstinline;trans_step; rule (from above) is compiled into these loops (given in pseudo-Python):
\begin{lstlisting}
pfor (x, y) in edge:
    lb = (y, DOMAIN_MIN) # key for lower bound
    ub = (y, DOMAIN_MAX) # key for upper bound
    for (_, z) in $\delta$(reach)$^{[i-1]}$.range_inclusive(lb, ub):
        if (x, z) not in reach$^{[i]}$:
            $\delta$(reach)$^{[i]}$.insert((x, z))
\end{lstlisting}
\souffle parallelizes the outermost loop using an OpenMP parallel \lstinline;for; loop~\citep{dagum1998openmp}.
This loop iterates over the entire \lstinline;edge; relation; the next loop iterates over a slice of the $\delta(\mathsf{reach})^{[i-1]}$ relation that is extracted by making a range query (here, over the entire domain).
Much of the engineering and research behind \souffle has gone into making these data structure operations efficient.
First off, \souffle uses an optimal indexing strategy minimizing the number of indices that need to be maintained for a relation~\citep{souffleindex}.
Second, \souffle uses concurrent tries~\citep{souffletrie} and B-trees~\citep{soufflebtree} that have been specialized for Datalog evaluation.
These optimizations help make \souffle a top-performing Datalog engine on static analysis workloads~\citep{recstep}, as evidenced by its use as the backbone of a slew of static analysis platforms~\cite{gigahorse,madmax,datalogdisassembly,smaragdakis2021symbolic,Antoniadis2017Porting}.

\subsection{Formulog}\label{sec:formulog}

\begin{figure}[t]
  \[ \begin{array}{@{}l@{}rcl@{}}
      \multicolumn{4}{l}{\textbf{Constructs}}                                                                                                                                \\ \hline

      \text{Programs}     & \flg{\prog} & ::= & \many{T}~\many{F}~\many{\flg{H}}                                                                                             \\
      \text{Functions}    & F           & ::= & \mathsf{fun}~\flg{f}(\many{X}) = e                                                                                           \\
      \text{Horn clauses} & \flg{H}     & ::= & p(\many{e}) \horn \many{\flg{A}}                                                                                             \\
      \text{Atoms}        & \flg{A}     & ::= & p(\many{e}) \BNFALT !p(\many{e}) \BNFALT  e == e \BNFALT e \noteq e \BNFALT X \leftarrow e \BNFALT e \rightarrow c(\many{X}) \\
      \text{Constants}    & k           & ::= & \mathsf{true} \BNFALT \mathsf{false} \BNFALT n \BNFALT \dots                                                                 \\
      \text{Expressions}  & e           & ::= & X \BNFALT k
      \BNFALT c(\many{e})
      \BNFALT \flg{f}(\many{e})
      \BNFALT p(\many{e})
      \BNFALT \letin{X}{e}{e} \BNFALT                                                                                                                                        \\
                          &             &     & \ite{e}{e}{e} \BNFALT \match{e}{[c(\many{X}) \rightarrow e]^*}
    \end{array}\]

  \[\begin{array}{lrclclrcl}
      \multicolumn{9}{l}{\textbf{Additional namespaces}}                                        \\ \hline
      \text{Type definitions} & T                   & \in & \mathrm{TypeVar} &
      \quad                   & \text{Constructors} & c   & \in              & \mathrm{CtorVar} \\
      \text{Function symbols} & \flg{f}             & \in & \mathrm{FunVar}
    \end{array} \]
  \caption{A grammar for core Formulog, which extends Datalog with functional programming via algebraic data type definitions $T$, functions $F$, and expressions $e$.
    The functional language includes constructors for building terms representing SMT formulas and functions for testing their satisfiability and extracting models.}
  \label{fig:formulog}
\end{figure}

Formulog~\citep{formulog} is a recent Datalog variant for implementing static analyses that use satisfiability-modulo-theories (SMT) solving~\citep{smt_survey}, a key automated reasoning technology that enriches boolean satisfiability (SAT) solving with predicate logic and the ability to reason about common program constructs such as arrays, machine integers, and IEEE floating point numbers.

We focus on a minimal core representation of Formulog (Figure~\ref{fig:formulog}).
The key characteristic of this representation is that unification predicates $e = e$ have been replaced with three new types of atoms: check atoms, assignment atoms, and destructor atoms.
This representation is simpler to interpret and compile, as it separates out and sequences heterogeneous operations that can occur implicitly within a single unification.
A \emph{check atom} $e == e$ holds if two expressions evaluate to the same value, like an equality predicate.
An \emph{assignment atom} $X \leftarrow e$ assigns the value of an expression $e$ to a variable $X$ (which cannot be bound to a value beforehand; atoms are evaluated left-to-right).
A \emph{destructor atom} $e \rightarrow c(\many{X})$ fails if $e$ does not evaluate to a value with outermost constructor $c$; otherwise, it assigns the constructor's arguments to the variables \many{X} (which cannot be bound).
We write $e \noteq e$ to negate a check atom---like an inequality predicate---but assignment and destructor atoms cannot be negated.

Beyond Horn clauses, Formulog supports first-order functional programming and SMT solving.

Formulog embeds a functional language with standard features like algebraic data types and pattern matching.
It also provides the ability for functional code to reference Datalog relations by treating a predicate symbol $p$ as a function symbol.
For example, if the tuple \lstinline;(0, 1); is a member of the relation \lstinline;reach;, the expression \lstinline;if reach(0, 1) then "yes" else "no"; evaluates to the string \lstinline;"yes";, and evaluates to \lstinline;"no"; otherwise.
Formulog requires the use of predicates-as-functions to be stratified: if evaluating a rule with head predicate symbol $p$ might run functional code that invokes predicate $q(\many{e})$, predicate symbol $q$ must be in a lower stratum than predicate symbol $p$.

Formulog's SMT support is built on top of its functional programming support: the various SMT term constructors are simply built-in Formulog-level constructors.
The built-in function \lstinline;is_sat; serializes its argument---a complex term representing an SMT formula---into the SMT-LIB standard format~\citep{smtlib} and discharges it to an external SMT solver to check for satisfiability.
Formulog's SMT API makes it easy to check validity and extract models, too.

The prototype Formulog system is 20k SLOC Java and uses semi-naive evaluation by default.%
\footnote{All SLOC counts given in this paper are of non-blank, non-comment, physical lines.}
The Formulog runtime supports parallel evaluation, and each evaluation worker thread interacts with its own stateful, external solver process.
Modern SMT solvers support incremental solving~\citep{Een2003Temporal}, which benefits from query {\em locality}: solving is (often) faster if consecutive queries to a given solver instance are similar.
Formulog uses a few encoding tricks to increase the number of conjuncts shared between queries on a given thread~\citep{formulog_incremental_smt}.

\section{Compiling Formulog to \souffle}\label{sec:compiler}

This section develops a compiler from Formulog to off-the-shelf \souffle, first presenting the high-level strategy (Section~\ref{sec:core_transl}) and then implementation details (Section~\ref{sec:transl_details}).
In the mean, the generated code has a 2.2$\times$ speedup over the Formulog interpreter and uses 38$\times$ less memory (Section~\ref{sec:evalcompiler}).

\subsection{Core Translation}\label{sec:core_transl}

The basic compilation strategy is to translate Formulog's Datalog fragment to \souffle code and its functional fragment to \cc code accessible to the \souffle code via external functors (Figure~\ref{fig:translation}).

\begin{figure}[p]

  \hdr{Contexts and additional namespaces}{}
  \vspace{-10pt}
  \[ \begin{array}{lrcl}
      \text{Function and code contexts} & \Gamma & ::=       &
      \cdot \BNFALT
      \Gamma, \flg{f} \mapsto \cpp{f} \BNFALT
      \Gamma, C
      \\

      \text{Logic variable contexts}    & \Delta & \subseteq & \mathrm{Var}
      \\%[0.5em]

      \text{\cc code}                   & C      & \in       & \mathrm{CppCode} \\
    \end{array} \]

  \hdr{Program translation}{
    \quad \fbox{$\vdash \transl{\flg{\prog}} \Rightarrow \many{\dl{H}}, C$}
  }

  \infrule[Prog]{
  \Gamma_0 = makeFlgRuntime(\many{T})
  \andalso
  \dl{\many{H}}_{admin} = makeAdminRules(\many{F}~\many{\flg{H}}) \\
  |\many{F}| = m
  \andalso
  \forall i \in [0, m).~\Gamma_i \vdash \transl{\many{F}[i]} \Rightarrow \Gamma_{i + 1} \\
  |\many{\flg{H}}| = n
  \andalso
  \forall i \in [0, n).~\Gamma_{m + i} \vdash \transl{\many{\flg{H}}[i]} \Rightarrow \dl{H}_i, \Gamma_{m + i + 1}
  \andalso
  C = extractCode(\Gamma_{m + n})
  }{
  \vdash \transl{\many{T}~\many{F}~\many{\flg{H}}} \Rightarrow \dl{\many{H}}_{admin} \concat [\dl{H}_0, \dots, \dl{H}_{n-1}], C
  }

  (helper functions are defined in \iffull Appendix~\ref{sec:helper_funcs}\else \ifanonymous the supplemental material\else the extended version of this paper~\citep{full}\fi\fi)

  \hdr{Function translation}{
    \quad \fbox{$\Gamma \vdash \transl{F} \Rightarrow \Gamma$}
  }

  (definition omitted; follows standard techniques)

  \hdr{Clause translation}{
    \quad \fbox{$\Gamma \vdash \transl{\flg{H}} \Rightarrow \dl{H}, \Gamma$}
  }

  \infrule[Clause]{
  |\flg{\many{A}}| = n
  \andalso
  \Delta_0 = vars(p(\many{e}) \horn \many{\flg{A}})
  \\
  \forall i \in [0, n).~\Gamma_i, \Delta_i \vdash \transl{\many{\flg{A}}[i]} \Rightarrow \dl{\many{A}}_i,\Gamma_{i+1}, \Delta_{i+1}
  \\
  \Gamma_n, \Delta_n \vdash \transl{p(\many{e})} \Rightarrow [p(\many{t})], \Gamma_{final}, \_
  \andalso
  \dl{\many{A}}_{body} = \dl{\many{A}}_0 \concat \cdots \concat \dl{\many{A}}_{n-1}
  }{
  \Gamma_0 \vdash \transl{p(\many{e}) \horn \many{\flg{A}}} \Rightarrow p(\many{t}) \horn \dl{\many{A}}_{body}, \Gamma_{final}
  }

  \hdr{Atom translation}{
    \quad \fbox{$\Gamma, \Delta \vdash \transl{\flg{A}} \Rightarrow \many{\dl{A}}, \Gamma, \Delta$}
  }

  \infrule[PosPred]{
  |\many{e}| = n
  \andalso
  \forall i \in [0,n).~\Gamma_i \vdash \transl{\many{e}[i]} \Rightarrow t_i, \Gamma_{i + 1}
  }{
  \Gamma_0, \Delta \vdash \transl{p(\many{e})} \Rightarrow [p(t_0, \dots, t_{n-1})], \Gamma_n, \Delta
  }

  \sidebyside[0.5][0.5]{
    \infrule[Check]{
      \Gamma_0 \vdash \transl{e_1} \Rightarrow t_1, \Gamma_1
      \andalso
      \Gamma_1 \vdash \transl{e_2} \Rightarrow t_2, \Gamma_2
    }{
      \Gamma_0, \Delta \vdash \transl{e_1 == e_2} \Rightarrow [t_1 = t_2], \Gamma_2, \Delta
    }
  }
  {
    \infrule[Assign]{
      \Gamma \vdash \transl{e} \Rightarrow t, \Gamma'
    }{
      \Gamma, \Delta \vdash \transl{X \leftarrow e} \Rightarrow [X = t], \Gamma', \Delta
    }
  }

  \infrule[Destruct]{
  \Gamma \vdash \transl{e} \Rightarrow t, \Gamma'
  \andalso
  \{X,Y,Z\} \cap \Delta = \emptyset
  \\
  \dl{\many{A}}_{dtor} = [
  X = t,
  @\mathsf{is\_ctor}_c(X) = Y,
  \mathsf{move\_barrier}(Y, Z)
  ]
  \\
  \dl{\many{A}}_{assign} = [
  X_0 = @\mathsf{nth}(0, X, Z),
  \dots,
  X_{n-1} = @\mathsf{nth}(n-1, X, Z)
  ]
  }{
  \Gamma, \Delta \vdash \transl{e \rightarrow c(X_0, \dots, X_{n-1})} \Rightarrow \dl{\many{A}}_{dtor} \concat \dl{\many{A}}_{assign}, \Gamma', \Delta \cup \{X, Y, Z\}
  }

  \hdr{Expression translation}{
    \quad \fbox{$\Gamma \vdash \transl{e} \Rightarrow t, \Gamma$}
  }
  \vspace{10pt}

  \sidebyside[0.3][0.65][c]{
    \infrule[Var]{}{
      \Gamma \vdash \transl{X} \Rightarrow X, \Gamma
    }
  }
  {
    \infrule[NonVar]{
      e \not \in \mathrm{Var}
      \andalso
      \flg{f}~\text{fresh in}~\Gamma
      \andalso
      \many{X} = freeVars(e)
      \\
      \Gamma \vdash \transl{\mathsf{fun}~\flg{f}(\many{X}) = e} \Rightarrow \Gamma'
      \andalso
      \Gamma'(\flg{f}) = \cpp{f}
    }{
      \Gamma \vdash \transl{e} \Rightarrow @\cpp{f}(\many{X}), \Gamma'
    }
  }

  \caption{Rules for translating from Formulog to \souffle.
    Negative predicates $!p(\many{e})$ and inequality predicates $e \noteq e$ can be translated analogously to their positive counterparts.
    We use the notation $\concat$ for list concatenation.
  }
  \label{fig:translation}
\end{figure}

At the top level (\rn{Prog}), the translation takes as input a Formulog program consisting of type definitions, function definitions, and Horn clauses, and produces as output (\souffle) Horn clauses and \cc code.
Translating a program consists of four main parts.
First, the Formulog type information is used to instantiate a skeleton \cc Formulog runtime.
The runtime includes the standard library and interface to external SMT solvers; type information is necessary to complete the representation of terms and their serialization into SMT-LIB.
Second, the Horn clauses and function definitions are examined to produce administrative rules that constrain the \souffle code to respect the semantics of the source Formulog program, such as having the same stratification.
Third, the function definitions are compiled to \cc functions using standard compilation techniques for functional languages (the only nonstandard feature being predicates $p(\many{e})$ invoked as functions; discussed in Section~\ref{sec:reify_relations}).
Finally, each Horn clause in the Formulog program is translated in turn (\rn{Clause}).

Translating a clause consists of translating the atoms in order, starting with the leftmost body atom and concluding with the head of the rule (\rn{Clause}).
During this process, the compiler keeps track of the variables that are used in the rule, so that it can generate fresh variables on demand.
Generic predicates $p(\many{e})$ (\rn{PosPred}) and $!p(\many{e})$ (rule omitted), equality checks $e == e$ (\rn{Check}) and $e \noteq e$ (rule omitted), and assignment $X \leftarrow e$ (\rn{Assign}) all compile down to single \souffle atoms.
On the other hand, destructors $e \rightarrow c(\many{X})$ compile down to {\em multiple} \souffle atoms (\rn{Destruct}), due to a limitation in the current version of \souffle (Section~\ref{sec:compiling_dtors_main}).

Expressions, which occur as arguments to atoms, are compiled to \souffle terms two ways.
Variables are translated using the identity function (\rn{Var}).
For all other expressions (\rn{NonVar}), the compiler creates a fresh Formulog function definition with that expression as its body and the expression's free variables as its arguments, and translates the function to \cc; the compiler then emits an external functor call invoking the \cc function corresponding to the expression.

\subsection{Implementation Details}\label{sec:transl_details}

Our compiler is 4,000 SLOC Java and the runtime is 3,000 SLOC \cc.

\subsubsection{Representing Values}

Formulog values are represented as integers in the \souffle code and as \cc objects in our \cc runtime.
The runtime performs hash consing during all value creation, so that there is exactly one \cc object per Formulog value.
This allows us to use the memory address of a Formulog object as the integer representation of that value in \souffle; translating values as they flow between \souffle and our runtime is as simple as casting.

An alternative would be to embed Formulog values into \souffle's type system, which supports algebraic data types.
We decided against this approach for four reasons.
First, there is not a natural translation between Formulog's type system and \souffle's.
For example, Formulog has multiple widths of integers, whereas \souffle has just one; Formulog data types have ML-style parametric polymorphism, whereas \souffle has templates.
Second, our approach gives us more control over the memoization of terms.
Third, it frees us from having to use \souffle's internal representation of complex values in the \cc code we generate for Formulog expressions.
Fourth, \souffle (as of v2.4) does not support unpacking complex terms returned from external functors---i.e., getting their arguments---and so most manipulation of terms would need to occur outside \souffle anyway.

\subsubsection{Compiling Destructors}\label{sec:compiling_dtors_main}

Because of a current limitation in \souffle, each Formulog destructor atom is compiled to {\em multiple} \souffle atoms (\rn{Destruct}; see discussion in \iffull Appendix~\ref{sec:compiling_dtors}\else \ifanonymous the supplemental material\else the extended version of this paper~\citep{full}\fi\fi).
It should in principle be straightforward to compile a destructor atom to a single \souffle atom using \souffle's support for complex terms; however, this currently leads to an assertion failure within the \souffle code generator.
While we expect this limitation to be fixed in a future version of \souffle, we do not believe that the current clunky translation leads to a substantial runtime cost.

\subsubsection{Reifying Relations}\label{sec:reify_relations}

In addition to standard functional programming fare, Formulog supports querying relations from functional code, by invoking predicates $p(\many{e})$ as functions.
We need to account for this both in our generated \cc code and \souffle code.
On the \cc side, when the main driver of the executable creates an instance of the \souffle program (i.e., a value of type \lstinline;souffle::SouffleProgram;), it stores it into a global variable before executing it.
This allows us to generate \cc code that looks up the content of relations while the \souffle program is running.
On the \souffle side, we create administrative rules to make sure that the stratification of the \souffle program lines up with the stratification of the Formulog program: the use of predicates as functions must be stratified (as mentioned in Section~\ref{sec:formulog}), but \souffle does not see the functional code.
For example, a rule defining a nullary relation \lstinline;p; might invoke a function \lstinline;f;, which in turn queries the emptiness of a nullary relation \lstinline;q;.
To force \souffle to put relation \lstinline;p; in a higher stratum than \lstinline;q;, we generate the rule \lstinline;p() :- empty(), !q().; (where \lstinline;empty; is an empty administrative relation).

\subsection{Evaluation}\label{sec:evalcompiler}

\begin{figure}[t]
  \includegraphics{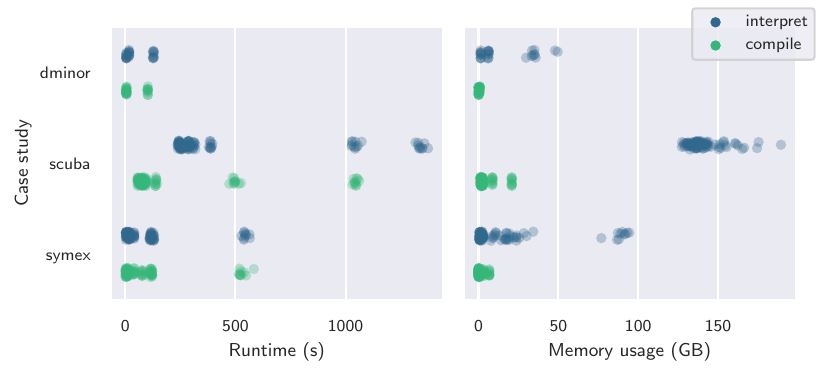}
  \caption{
    Formulog compiled to off-the-shelf \souffle achieves a 2.2$\times$ mean speedup over the baseline Formulog interpreter, while also using 38$\times$ less memory in the mean.
    Left is better; we omit compilation time (static analyses are compiled once and executed many times).
  }
  \label{fig:init_compiler}
\end{figure}

The code generated by our compiler beats the Formulog interpreter on the case studies and benchmarks from the Formulog paper~\citep{formulog} (Figure~\ref{fig:init_compiler}), achieving a mean speedup of 2.2$\times$ (min\sep{}median\sep{}max: \speedups{1.0}{1.7}{4.3}) and using 38$\times$ less memory in the mean (min\sep{}median\sep{}max: \speedups{6.7}{20}{110}).%
\footnote{
  All means reported in this paper are arithmetic.
  \ifanonymous
    See the supplementary material for raw results and data analysis scripts.\fi}
The baseline interpreter we compared against is an improved version of the one used in the original Formulog paper (e.g., it now completes on all benchmarks, whereas it previously timed out on three); we also manually tuned some of the baseline interpreter's parameters to achieve better performance, and resolved various performance anomalies.

We ran all experiments on an Ubuntu 22.04 m5.12xlarge Amazon Web Services EC2 instance, with an Intel Xeon processor clocked at \SI{3.1}{GHz}, 24 physical cores (48 vCPUs), and \SI{192}{GiB} RAM.
We use OpenJDK v17.0.7 as our Java runtime, g\texttt{++} v11.3.0 as our \cc compiler, Z3~\citep{z3} v4.12.2 (commit f7c9c9e) as our SMT solver, and \souffle v2.4.
Both the generated code and the Formulog interpreter are configured to use 40 threads.

To compute the runtime on a benchmark, we take the median of 10 trials.
We omit compilation times, on the basis that the Formulog programs in this benchmark suite are static analyses that are intended to be compiled once and then executed many times.
Compilation time is dominated by the time it takes to compile the \cc code generated by \souffle, which can take up to a few minutes.

The benchmarks fall into three case studies representing different SMT-based static analyses, each consisting of 1-1.5K lines of Formulog code (for details, see the original Formulog paper~\citep{formulog}).
We updated the code to work with newer versions of Formulog and to avoid unevaluated expressions in input facts (which the \cc runtime parser does not currently support).

\paragraph{Refinement Type Checking (\bm{dminor})}
This case study is a type checker for Dminor~\citep{dminor}, a language combining refinement types and dynamic type tests.
We categorize this case study as ``SMT heavy'' as the SMT solver is frequently invoked during type checking to prove subtyping relations and the termination of expressions (in the longest-running benchmark, 96k external SMT solver calls are made).
The SMT formulas are also particularly complex, as they touch on many SMT theories and include constructs like universally quantified formulas.
The three benchmarks are synthetic, created by composing all publicly available Dminor programs (that use the core language only) and then copying the composite program $n$ times, for $n \in \{1, 10, 100\}$.%
\footnote{The copies are transformed so that they are not syntactically equal, which in turn means that the SMT queries generated by Formulog for each copy are also not syntactically equal.
Because the base program exercises many parts of the core Dminor calculus, we believe this approach of~\citet{formulog} to synthesizing benchmarks leads to a reasonable approximation of the workloads that would be induced by variously sized Dminor developments, were they to exist.}
For the benchmark with 100 copies of the base program (\bm{all-100}), there are 4600 input tuples and 1.7 million output tuples; the most computationally intensive part of this case study is the SMT solving.
On this case study the compiler achieved a mean speedup of 1.9$\times$ (min\sep{}median\sep{}max: \speedups{1.2}{1.2}{3.2}).

\paragraph{Bottom-Up Points-To Analysis (\bm{scuba})}
This context-sensitive, bottom-up points-to analysis for Java uses SMT formulas to summarize methods~\citep{bottomupPointsto}.
While SMT solving is used when function summaries are instantiated, we categorize this case study as “SMT light” because few external SMT calls are made in practice (many potential SMT queries can instead be resolved by basic preprocessing done in the analysis code).
Additionally, a relatively small proportion of the analysis (20\%) is written in Formulog's functional fragment; hence, of the three case studies, this one most closely resembles conventional Datalog programs.
The 10 benchmarks are substantial Java applications used in the evaluation of the original implementation of this analysis~\citep{bottomupPointsto}, drawn primarily from existing Java benchmark suites including DaCapo~\citep{dacapo}.
Accordingly, these benchmarks have large numbers of input and output facts, with input fact databases ranging from 65k tuples (weblech) to 1.5 million tuples (xalan), and output fact databases ranging from 4.3 million tuples (polyglot) to 120 million tuples (sunflow).
On this case study the compiler achieved a mean speedup of 3.2$\times$ (min\sep{}median\sep{}max: \speedups{1.3}{3.4}{4.3}).

\paragraph{Symbolic Execution (\bm{symex})}
This case study is a bounded symbolic executor for a fragment of LLVM bitcode in the style of KLEE~\citep{klee}.
The 10 benchmark programs are compiled C programs that perform operations like array sorting, solving logic puzzles, and interpreting bitcode.
As the benchmarks are synthetic and need to fit within the supported fragment of LLVM, the source code for each benchmark is generally small in terms of lines of code; however, because many of the programs loop over symbolic data, the number of states to symbolically explore can be quite large.
We categorize this case study as “SMT heavy” as SMT calls are made frequently, any time the symbolic executor reaches a branch point conditioned on symbolic data (220k external SMT solver calls are made in sort-7, the benchmark with the most SMT calls).
The largest input fact database is 6000 tuples (prioqueue-6), but the largest output fact database is 28 million tuples (sort-7).
On this case study the compiler achieved a mean speedup of 1.4$\times$ (min\sep{}median\sep{}max: \speedups{1.0}{1.4}{2.0}).

\section{Eager Evaluation}\label{sec:eagereval}

The \souffle code generated by our compiler outperforms the Formulog interpreter when both perform semi-naive evaluation.
This is hardly surprising, as using \souffle has many advantages over using the Formulog prototype, such as compilation instead of interpretation, \cc instead of Java, and data structures specialized for Datalog instead of generic ones.

What {\em is} surprising is that we can do better, and with relatively little effort.
There are many possible ways to evaluate a Datalog program, which is declarative: a Datalog program comprises some logical inference rules with no stipulations as to how to discover the rules' consequences. We need not be beholden to the standard strategy.
In fact, one might even expect that new Datalog variants would put different stresses on Datalog engines than traditional Datalog workloads, and thus would benefit from alternative Datalog evaluation strategies.
(In fact, even different \emph{Datalog} workloads benefit from different implementation techniques~\citep{recstep}.)

Formulog indeed differs from traditional Datalog: external SMT solving is typically one of the most expensive parts of evaluating a Formulog program.
This section presents \emph{eager evaluation}, an alternative, worklist-based strategy for parallel Datalog evaluation that achieves a quasi-depth-first search of the logical inference space by submitting inference tasks to a work-stealing thread pool.%
\footnote{In a work-stealing thread pool~\citep{mohr1990lazy,arora1998thread,blumofe1999scheduling}, each worker thread maintains a last-in-first-out (LIFO) stack of work items.
  Since work items in our setting correspond to logical derivations, the LIFO discipline leads to the most recent derivations being explored first; see additional discussion in Section~\ref{sec:eager_description}.
}
Compared to semi-naive evaluation, which performs a breadth-first search, eager evaluation entails a different distribution of SMT calls across worker threads (and thus external SMT solvers); in practice, it tends to lead to better SMT solving times.
For example, on some Formulog programs, eager evaluation ends up putting similar SMT calls on the same worker thread, leading to more effective incremental SMT solving.
Using eager evaluation, the Formulog interpreter beats the semi-naive code generated by \souffle on SMT-heavy benchmarks (Section~\ref{sec:eagerevalresults}).

This section motivates eager evaluation (Section~\ref{sec:eager_motivation}), describes the high-level algorithm (Section~\ref{sec:eager_description}), and proves its correctness (Section~\ref{sec:eager_correctness}).

\subsection{Motivating Eager Evaluation}\label{sec:eager_motivation}

\begin{figure}[t!]
  \centering
  \begin{minipage}{.5\textwidth}
    \centering
    \includegraphics[width=0.75\textwidth]{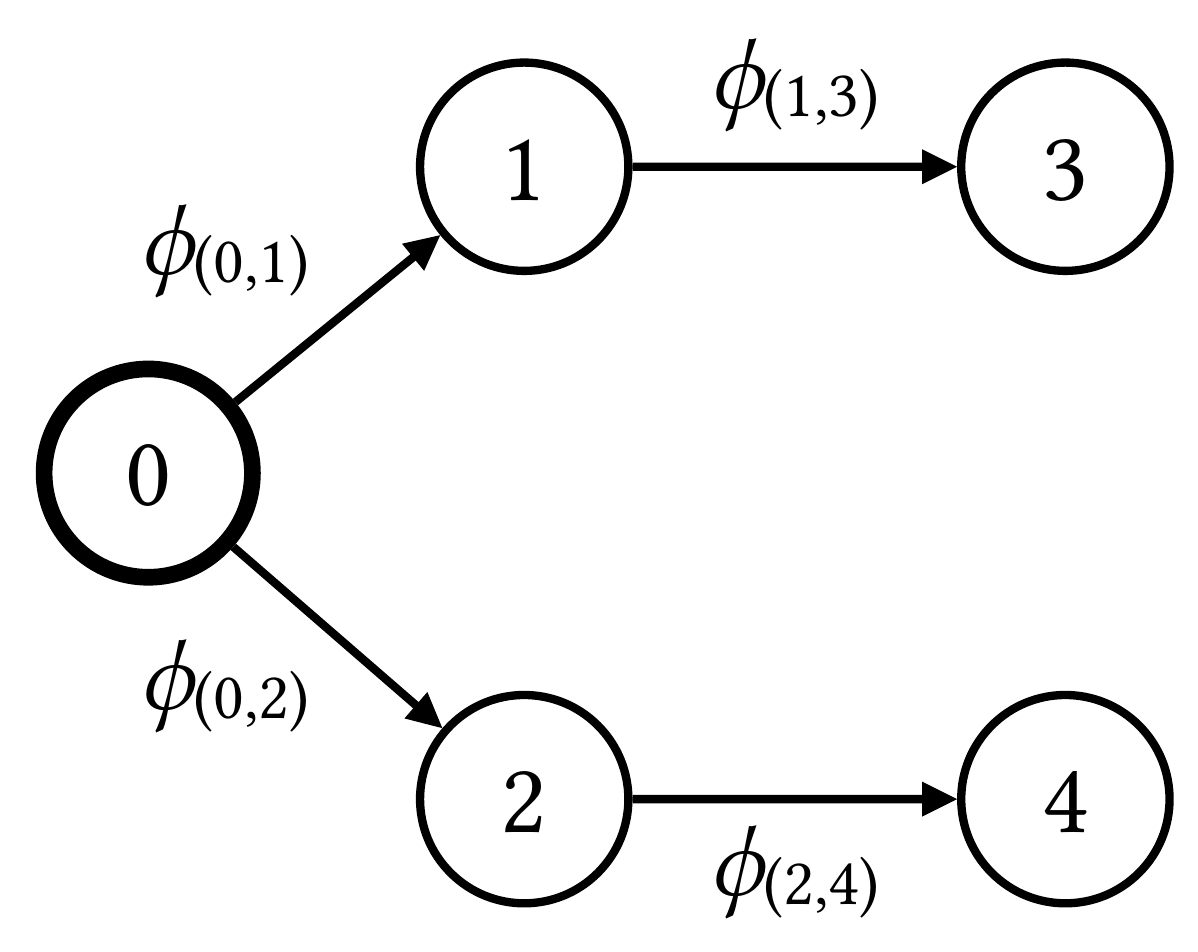}
  \end{minipage}%
  \begin{minipage}{.5\textwidth}
    \centering
    \begin{lstlisting}
(* base case *)
reach(0, $\bk$true$\bk$). 

(* recursive case *)
reach(Y, $\bk$Phi /\ Psi$\bk$) :-
  reach(X, Phi),
  edge(X, Y, Psi),
  is_sat($\bk$Phi /\ Psi$\bk$) = true.
      \end{lstlisting}
  \end{minipage}
  \caption{A Formulog program computes reachability over a tree labeled with SMT propositions $\phi_{(i,j)}$ by using the built-in constructor for conjunction (\lstinline;/$\texttt{\textbackslash}$;) and the built-in function \lstinline;is_sat; to check satisfiability.
    Backticks demarcate SMT formulas.
  }
  \label{fig:reachability}
\end{figure}

To motivate eager evaluation, consider the task of computing reachability on trees where the edges are labeled with logical propositions.
The root node is trivially reachable.
Some other node $n$ is reachable if its parent $m$ is reachable from the root node via a $k$-hop path with edge labels $\phi_1, \dots, \phi_k$, there is an outgoing edge from node $m$ with label $\psi$ to a child node $n$, and the conjunction $\phi_1 \wedge \cdots \wedge \phi_k \wedge \psi$ is satisfiable---that is, reachability is as in a directed graph, modulo satisfiability of path conditions.
This setting exemplifies a fundamental problem in real static analyses, mimicking how symbolic execution explores program execution trees.

For the sake of concreteness, say we run the Formulog program computing this type of graph reachability on a sample tree (Figure~\ref{fig:reachability}), using a single thread of execution.
If we use semi-naive evaluation, the program will discover nodes in a breadth-first order: 0-1-2-3-4.
Doing so involves four SMT calls, which occur in this order: $\phi_{(0,1)}$, $\phi_{(0,2)}$, $\phi_{(0,1)} \wedge \phi_{(1,3)}$, and $\phi_{(0,2)} \wedge \phi_{(2,4)}$.
Because adjacent calls do not share any conjuncts, the SMT solver is not able to naturally perform incremental SMT solving; in the worst case, it would have to clear its state between calls that \emph{do} share conjuncts.
And, while Formulog does use some tricks to encode conjuncts in the SMT solver's state so that they can be disabled and enabled in subsequent calls~\citep{formulog_incremental_smt}, these techniques are necessarily limited and introduce overhead.

In contrast, if one were to compute reachability over the example graph using a Datalog evaluation that simulates depth-first search, we might encounter the nodes in the order 0-1-3-2-4, inducing SMT calls in the order $\phi_{(0,1)}$, $\phi_{(0,1)} \wedge \phi_{(1,3)}$, $\phi_{(0,2)}$, and $\phi_{(0,2)} \wedge \phi_{(2,4)}$.
This order naturally leads to incremental SMT solving, since the first two calls share a conjunct, as well as the last two calls.
For example, by checking the satisfiability of the proposition $\phi_{(0,1)}$, the solver might learn information useful for checking the satisfiability of the conjunction $\phi_{(0,1)} \wedge \phi_{(1,3)}$.

In general, semi-naive evaluation computes a breadth-first search through the logical inference space: it computes derivations of height $k$ only after it has computed all derivations of height less than $k$.
For the example of graph reachability, this quite literally matches breadth-first graph traversal.
If we were in the world of Datalog and concerned solely with computing structural reachability in the graph (irrespective of the satisfiability of path conditions), there would not be an obvious cost to performing breadth-first search compared to some other traversal.
However, given that we are in the world of Formulog and need to respect the satisfiability of path conditions, order \emph{does} matter: depth-first search leads to more opportunities for incremental SMT solving.
Put another way, we want a Datalog evaluation strategy that prioritizes computing the consequences of the most recently derived fact---i.e., finding those facts where the most recently derived fact appears in the proof tree---instead of processing facts in order of the height of their proof trees.

This is what motivates eager evaluation: a Datalog evaluation strategy that, by being closer to depth-first search, leads to different and hopefully advantageous SMT workloads for Formulog programs, while staying amenable to effective parallelization.

\subsection{Eager Evaluation: Parallel, Most-Recent-First Search}\label{sec:eager_description}

Eager evaluation explores the inference space using a parallelized, DFS-like search.
Making derivations in a strict depth-first order would lead to poor parallelism; instead, eager evaluation performs a more relaxed traversal strategy we think of as ``most-recent-first search.''
This retains the spirit of DFS---it prioritizes processing the most recently derived facts---while being naturally parallelizable.

Similar to semi-naive evaluation, eager evaluation is applied one stratum at time, making it compatible with stratified negation~\citep{apt1988towards,przymusinski1988declarative,vangelder1989negation} and stratified aggregation~\citep{mumick1990magic}.
It is a worklist-based scheme: initial work items are submitted to the worklist; work items recursively submit new work items; evaluation is finished once the worklist is empty.
To both parallelize evaluation and achieve a most-recent-first order, eager evaluation uses a work-stealing thread pool for the worklist~\citep{mohr1990lazy,arora1998thread,blumofe1999scheduling}.
Each worker thread in the pool maintains a dequeue of work items that it pops from the back using the last-in-first-out stack discipline; since a work item corresponds to a new inference to explore, this discipline prioritizes recent derivations, and thus approximates a depth-first traversal through the logical inference space.
If a worker thread's dequeue becomes empty, it steals from the front of a random thread's dequeue using the first-in-first-out queue discipline, leading to lower thread contention, and often good locality.

Each work item submitted to the thread pool represents a rule to evaluate.
There are two types of work items, representing different types of rules.
The first type represents a rule that is not recursive in the current stratum; that is, all the body predicates are defined in previous strata.
These are the initial work items (one per non-recursive rule).
The second type of work item represents a recursive rule that has been specialized to a particular fact $p(\many{n})$.
To specialize a rule, 1) unify a body atom $p(\many{t})$ with the fact $p(\many{n})$ to get a substitution from variables to constants; 2) apply this substitution across the rule; and 3) reorder the rule so that the body atom $p(\many{n})$ is first.
For example, consider the \lstinline;trans_step; rule for graph transitive closure (Section~\ref{sec:datalog}):
$$\text{\lstinline;reach(X, Z) :- edge(X, Y), reach(Y, Z).;}$$
Specialized to the fact \lstinline;reach(0, 1);, the rule would become
$$\text{\lstinline;reach(X, 1) :- reach(0, 1), edge(X, 0).;}$$
Specialization can fail if the body atom $p(\many{t})$ cannot be unified with the fact $p(\many{n})$; for example, the body atom is \lstinline;p(0); and the fact is \lstinline;p(1);.

When a work item is processed, it computes the direct consequences of the rule it corresponds to, with respect to the \emph{current contents} of relations.
This differs from semi-naive evaluation, which refers to the contents of relations at the beginning of the iteration (i.e., it indexes into relations of the form $p^{[i]}$); eager evaluation does not have explicit iterations and instead works in an unbatched, ``tuple at a time'' fashion.
The fact that rule evaluation uses more up-to-date relations than in semi-naive evaluation does not impact correctness, but it does mean that eager evaluation can materialize some redundant combinations of tuples avoided in semi-naive evaluation.

If a new fact $p(\many{n})$ is derived for some constants $\many{n}$, we add the fact to the $p$ relation and immediately use it to create new work items.
Say there are $k$ occurrences of predicates shaped like $p(\many{t})$ in the bodies of rules that are recursive in the current stratum; the algorithm will try to submit $k$ new work items, one for each occurrence, specializing the body atom $p(\many{t})$ in the relevant rule to the new fact $p(\many{n})$.
Fewer than $k$ work items might be submitted if specialization fails in some case.

\subsubsection{Pseudocode}

\begin{algorithm}[!t]
  \caption{
    Function {\sc workerThread}(i) gives the core logic performed by each worker thread $i$ in eager evaluation.
    For each worker thread $i$, there is a worklist (deque) $D_i$; for each predicate $p$, there is a fact database $DB_p$.
    Worklists and databases must be thread safe.
  }
  \label{algo:eager}
  \begin{algorithmic}[1]
    \Function{workerThread}{$i$}
    \Loop
    \LComment{Dequeue a work item (a rule to evaluate)}
    \State $R \gets D_i.dequeueBack()$
    \Comment{Use LIFO discipline for local worklist}
    \While{$R = \bot \wedge workLeft()$} \Comment{Scan other threads if local worklist is empty}
    \State $j \gets chooseRandomThread()$
    \State $R \gets D_j.dequeueFront()$
    \Comment{Use FIFO discipline if stealing from another thread}
    \EndWhile
    \If{$R = \bot$}
    \Comment{No work left in thread pool}
    \State \textbf{break}
    \EndIf
    \LComment{Evaluate rule $R$, eagerly yielding facts as they are derived (i.e., no batching)}
    \ForAll{$p(\many{n}) \in \Call{evalRule}{R}$}
    \If{$DB_p.addIfAbsent(\many{n})$}
    \Comment{True when tuple $\many{n}$ is novel}
    \ForAll{$R' \in specializeRules(p(\many{n}))$}
    \State $D_i.enqueueBack(R')$
    \Comment{Insert work item in back of current thread's worklist}
    \EndFor
    \EndIf
    \EndFor
    \EndLoop
    \EndFunction
  \end{algorithmic}

  \algrenewcommand\algorithmicreturn{\textbf{yield}}
  \algrenewcommand\algorithmicloop{\dots}
  \begin{algorithmic}[1]
    \setcounter{ALG@line}{14}
    \Statex
    \Function{evalRule}{$p_0(\many{t}_0) \horn p_1(\many{t}_1), p_2(\many{t}_2), \dots, p_n(\many{t}_n).$}
    \State $\sigma_1 = \lambda x. \bot$ \Comment{Start with empty substitution}
    \ForAll{$\many{n}_1 \in DB_{p_1}.query(\sigma_1(\many{t}_1))$}
    \Comment{Iterate through tuples matching key}
    \State $\sigma_2 = \sigma_1 \circ [\many{t}_1 \mapsto \many{n}_1]$
    \Comment{Compose substitutions}
    \ForAll{$\many{n}_2 \in DB_{p_2}.query(\sigma_2(\many{t}_2))$}
    \State $\sigma_3 = \sigma_2 \circ [\many{t}_2 \mapsto \many{n}_2$]
      \Loop
      \ForAll{$\many{n}_n \in DB_{p_n}.query(\sigma_n(\many{t}_n))$}
      \State $\sigma_{n + 1} = \sigma_n \circ [\many{t}_n \mapsto \many{n}_n$]
    \State \Return{$p_0(\sigma_{n + 1}(\many{t}_0))$}
    \Comment{Eagerly yield derived facts (do not batch them)}
    \EndFor
    \EndLoop
    \EndFor
    \EndFor
    \EndFunction
  \end{algorithmic}
\end{algorithm}

Algorithm~\ref{algo:eager} gives pseudocode for the logic implemented by each worker thread $i$ during eager evaluation.
The function {\sc workerThread} gives the core procedure performed by thread $i$, which loops until there is no work left to do (line 8).
In each iteration, thread $i$ first tries to dequeue a work item from the back of its own deque (line 4); if this fails because the deque is empty (the returned work item is $\bot$), it repeatedly tries to steal a work item from the deques of the other threads (lines 5-7), scanning until it successfully steals a work item or the function call $workLeft()$ returns false (indicating that all threads are trying to steal work items, and so all deques are empty).
In practice, the work-stealing thread pool---not code implemented by the Datalog system engineer---distributes work items and decides when to terminate the threads.
Once thread $i$ has a work item in the form of a rule $R$, the thread evaluates rule $R$ using the function {\sc evalRule}; the evaluation code (lines 16-24) is standard for Datalog, except that facts are eagerly yielded as they are derived (line 24).
That is, the function {\sc evalRule} returns a fact {\em generator}, instead of a concrete set of facts.
For each generated fact, thread $i$ checks if that fact is novel (line 12); if so, it adds a work item to the end of its deque for each rule specialized to that fact (lines 13-14).

Because the code in the {\sc evalRule} function is very similar to the code used to evaluate a rule in semi-naive evaluation (cf. Section~\ref{sec:datalog}) and uses the same set of data structure operations (e.g., looping over the index of a relation), we are able to heavily reuse existing Datalog infrastructure when implementing eager evaluation, as we demonstrate in Section~\ref{sec:eagerimpl}.

\subsection{Correctness of Eager Evaluation}\label{sec:eager_correctness}

Eager evaluation is correct; in particular, one can prove the following theorem (see \iffull Appendix~\ref{sec:correctness_proofs} \else \ifanonymous the supplemental material \else the extended version of this paper~\citep{full} \fi \fi for the complete proofs of all theorems):

\begin{theorem}[Correctness]\label{thm:main-correctness}
  Given an arbitrary Datalog program, eager evaluation and semi-naive evaluation derive exactly the same facts.
\end{theorem}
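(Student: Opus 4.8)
The plan is to show that eager evaluation and semi-naive evaluation both compute the same mathematical object---the least fixpoint of the immediate consequence operator $\immedConseqOp$---and then to invoke the standard correctness result that semi-naive evaluation computes exactly this least fixpoint. Since the excerpt already establishes that eager evaluation processes strata one at a time in dependency order (Section~\ref{sec:eager_description}), and the only dependencies that cross strata---negation (and aggregation)---refer to strictly lower strata, I would argue by induction on the strata. Assuming the facts of all lower strata have already been computed correctly, negation-as-failure is fully determined and agrees with semi-naive; it then suffices to fix a single stratum, treat the lower-stratum facts as a fixed set of input facts, and show that eager evaluation saturates the current stratum to exactly the least fixpoint $\mathrm{lfp}$ of $\immedConseqOp$ restricted to that stratum.

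For a single stratum I would prove the two inclusions separately. Soundness ($DB \subseteq \mathrm{lfp}$) is routine: every fact added by $addIfAbsent$ in Algorithm~\ref{algo:eager} is produced by \textsc{evalRule} from a rule instance all of whose body facts already reside in the databases, and is therefore an immediate consequence of facts already known to lie in the least fixpoint; a straightforward induction over the linearization order of the (thread-safe, atomic) $addIfAbsent$ operations shows that every inserted fact lies in $\mathrm{lfp}$. The substance of the argument is completeness, which I would reduce to a single saturation lemma: \emph{at termination the database $DB$ is closed under immediate consequence, i.e. $\immedConseqOp(DB) \subseteq DB$}. Given this lemma, $DB$ is a fixpoint of $\immedConseqOp$ containing the input facts, so $\mathrm{lfp} \subseteq DB$; combined with soundness this yields $DB = \mathrm{lfp}$, matching semi-naive evaluation.

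To establish the saturation lemma I would use the standard ``last-inserted fact'' technique, adapted to the unbatched, concurrent setting. Suppose toward a contradiction that at termination there is a rule $r$ and a substitution $\sigma$ witnessing $f_0 = p_0(\sigma(\many{t}_0)) \in \immedConseqOp(DB) \setminus DB$, with body facts $f_1, \dots, f_n \in DB$. If $r$ is non-recursive in the stratum, all $f_i$ come from lower strata and are present from the start, so the initial work item for $r$---processed before termination---already yields $f_0$, a contradiction. Otherwise $r$ has at least one current-stratum body atom; let $f_j$ be the current-stratum body fact inserted last under the linearization order of $addIfAbsent$. When $f_j$ was found novel, $specializeRules(f_j)$ together with $enqueueBack$ created the work item $R'$ obtained by specializing the matching occurrence of $p_j(\many{t}_j)$ in $r$ to $f_j$. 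Because evaluation terminates only when every deque is empty and no thread is working ($workLeft()$ false), $R'$ must have been processed; and since facts are never removed and every other body fact was inserted no later than $f_j$ (the lower-stratum facts being present throughout), \textsc{evalRule} reads the \emph{current} relation contents and finds all of $f_1, \dots, f_n$, yielding $f_0$ and adding it via $addIfAbsent$---contradicting $f_0 \notin DB$. Crucially, the work-stealing, LIFO/FIFO scheduling discipline affects only the \emph{order} in which work items are processed (hence the SMT workload distribution that motivates the algorithm), not \emph{which} facts are eventually derived, so the argument is insensitive to the concurrent schedule.

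I expect the main obstacle to be making the saturation lemma rigorous in the concurrent setting: precisely defining the linearization order of the atomic $addIfAbsent$ and $enqueueBack$ operations, arguing that ``at termination'' denotes a well-defined global state (all deques empty, all threads idle), and confirming that the reordering and specialization used to build $R'$ preserve exactly the set of derivable head facts even though eager evaluation reads up-to-date rather than snapshotted relations. Termination itself I would dispatch with the usual safe-Datalog argument---the set of derivable facts over the finite active domain is finite, so only finitely many insertions and work items occur---which also guarantees that the final state against which the two algorithms are compared actually exists.
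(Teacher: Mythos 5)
Your proof is correct in outline, but it takes a genuinely different route from the paper's. The paper also inducts over strata, but within a stratum it proves two fact-level lemmas by strong induction on derivation-tree height---soundness (every eager-derived fact is semi-naive-derived) and completeness (every semi-naive-derived fact is eager-derived)---comparing directly against semi-naive evaluation rather than routing through the least fixpoint of \immedConseqOp{} as you do. The concurrency core is also handled differently: where you pick the \emph{last-inserted} body fact under a global linearization of the $addIfAbsent$ operations and argue that the work item it spawns sees all earlier insertions, the paper axiomatizes a happens-before relation (per-relation read/write ordering from linearizability, program order, and an enqueue-to-dequeue visibility guarantee from the thread pool) and proves a chain lemma: if the work item for fact $a_i$ fails to see some write $w_k$, then $\happensbefore{x_i}{x_k}$, so the work item for $a_k$ sees a \emph{strictly larger} set of writes, and induction on the number of invisible writes produces a work item that succeeds. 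The paper's argument thus needs only that reads and writes on the \emph{same} data structure are ordered, whereas your last-writer argument presupposes a total order on insertions across \emph{different} relations $DB_p$, $DB_q$---obtainable from compositional linearizability plus the thread-pool guarantee, but this is exactly the assumption you must make explicit, and you correctly flag it as the main obstacle. Conversely, your approach buys a cleaner global statement---at quiescence the database is closed under \immedConseqOp{}, hence equals the least fixpoint---which localizes all concurrency reasoning in one saturation lemma and avoids induction over derivation trees; it does, however, additionally rely on termination (which you dispatch via finiteness of the active domain) and on the standard fact that semi-naive evaluation computes the least fixpoint, neither of which the paper's fact-by-fact formulation needs to invoke explicitly.
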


\noindent The proof of Theorem~\ref{thm:main-correctness} rests on two lemmas, both of which make a claim about the behavior of eager evaluation on a given stratum (as opposed to the complete Datalog program).
First, eager evaluation is sound on the stratum: it derives {\em only} the facts derived by semi-naive evaluation (Lemma~\ref{lemma:main-soundness}).
Second, eager evaluation is complete on the stratum: it derives {\em all} the facts derived by semi-naive evaluation (Lemma~\ref{lemma:main-completeness}).

\begin{lemma}[Soundness]\label{lemma:main-soundness}
  Fixing the facts derived in previous strata, if eager evaluation derives a fact $a_0$ in the current stratum, semi-naive evaluation also derives fact $a_0$ in the current stratum.
\end{lemma}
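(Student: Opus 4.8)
The plan is to prove the stated implication directly: I will show that every fact $a_0$ that eager evaluation adds to some database is also produced by semi-naive evaluation, by inducting on the \emph{order in which facts are first added}. Since semi-naive evaluation is known to compute the least fixpoint of the immediate-consequence operator $\immedConseqOp$ for the stratum (restricted by the fixed lower-stratum facts), it suffices to show that each eager-derived fact $a_0$ is an immediate consequence of facts that semi-naive evaluation has itself derived. The whole argument is therefore a reduction from eager evaluation's operational, ``current contents'' behavior to the declarative closure property enjoyed by semi-naive evaluation.

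To make the induction well founded in the presence of concurrency, I would first \emph{linearize} the successful $addIfAbsent$ operations. Because the databases $DB_p$ are thread safe, their mutations admit a linearization order; I assign each current-stratum fact the timestamp of the unique successful $addIfAbsent$ that first inserts it, and perform strong induction on this timestamp. The key observation driving the induction is that within a single invocation of {\sc evalRule} (Algorithm~\ref{algo:eager}), every body tuple $\many{n}_i$ used to build the yielded head $a_0 = p_0(\sigma_{n+1}(\many{t}_0))$ is read from $DB_{p_i}$ strictly before $a_0$ is yielded, and hence before $a_0$'s own $addIfAbsent$; thus each current-stratum body fact has a strictly smaller timestamp than $a_0$ and is, by the induction hypothesis, derived by semi-naive evaluation. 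Body atoms that reference lower strata (including negated atoms $!p(\many{t})$, which stratification forces into lower strata) are fixed and identical across the two evaluators, and equality/inequality atoms are purely semantic checks on constants; so all of $a_0$'s premises hold in semi-naive's world.

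It remains to turn ``all premises hold'' into ``$a_0$ is an immediate consequence of an original program rule.'' Here the subtlety is that the rule $R$ evaluated by the work item is not an original program clause but either a non-recursive clause of the stratum or a clause \emph{specialized} to some triggering fact $p(\many{n})$ (unify a recursive body atom with $p(\many{n})$, apply the substitution, and move that atom to the front). I would isolate a \textbf{specialization-soundness} sub-lemma: any head fact that {\sc evalRule} yields from a specialized rule $R$ is also yielded by evaluating the original rule against the same databases, because the specializing substitution is just the unifier of one body atom with a fact already in $DB_p$, and the substitution composition $\sigma_{i+1} = \sigma_i \circ [\many{t}_i \mapsto \many{n}_i]$ in {\sc evalRule} reconstructs a satisfying assignment for the unspecialized clause. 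With this sub-lemma, $a_0$ is a genuine immediate consequence of the original rule applied to facts semi-naive has derived, and the fixpoint property of semi-naive evaluation places $a_0$ in its output, closing the induction.

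I expect the main obstacle to be the interaction between concurrency and the ``current contents'' reads: unlike semi-naive evaluation, {\sc evalRule} may combine body tuples that were never simultaneously ``in the same round,'' so the proof must avoid any appeal to a consistent global snapshot. The saving grace for \emph{soundness} (as opposed to completeness) is that I only need each body fact to have been derived \emph{at some point}, not to coexist in one round; the linearization argument delivers exactly this, making the snapshot inconsistency benign here. The other fiddly piece is the specialization-soundness sub-lemma, whose proof is mechanical but requires careful bookkeeping of the substitution composition and the reordering of body atoms.
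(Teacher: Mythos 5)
Your proof is correct, but it takes a genuinely different route from the paper's. The paper proves soundness by strong induction on the \emph{height of the derivation tree} by which eager evaluation derived $a_0$ (with the convention that lower-stratum facts have height $0$): in the base case the rule has no recursive body predicates and semi-naive evaluation derives $a_0$ directly; in the inductive case it derives the children $a_1,\dots,a_n$, takes $h$ to be the maximum height of their shortest semi-naive derivations, and pinpoints iteration $h+1$ as the round in which semi-naive evaluation derives $a_0$ ``using a version of rule $R$'' (i.e., one of the semi-naive rewrites). You instead induct on the \emph{linearization order of successful $addIfAbsent$ insertions}, reduce specialized rules to the original clauses via an explicit specialization-soundness sub-lemma, and then invoke the closure (least-fixpoint) property of semi-naive output rather than identifying a concrete iteration. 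Your approach buys two things the paper glosses over: first, the paper simply assumes the eager-derived fact \emph{has} a derivation tree, which in a concurrent setting itself requires a causality argument---your linearization of insertions (each body tuple, including the triggering fact of a specialized work item, is inserted before the reads that observe it, hence before the head's insertion) supplies exactly the well-foundedness the paper leaves implicit; second, you discharge the bookkeeping for specialized rules that the paper compresses into the phrase ``a version of rule $R$.'' Conversely, the paper's argument is more elementary and deliberately mirrors its completeness proof (both are strong inductions on tree height), and by exhibiting the specific iteration $h+1$ it avoids appealing to the fixpoint characterization of semi-naive evaluation as an external fact. One point worth making explicit in your write-up: for a specialized rule the first body atom is ground at specialization time rather than read during \textsc{evalRule}, so its timestamp bound comes from the fact that the write on line 12 of Algorithm~\ref{algo:eager} precedes the work item's submission on line 14---your sub-lemma's phrasing (``a fact already in $DB_p$'') covers this, but the timestamp comparison deserves the one-line justification.
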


\begin{lemma}[Completeness]\label{lemma:main-completeness}
  Fixing the facts derived in previous strata, if semi-naive evaluation derives a fact $a_0$ in the current stratum, eager evaluation also derives fact $a_0$ in the current stratum.
\end{lemma}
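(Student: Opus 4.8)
The plan is to induct on the minimal proof height of $a_0$ within the current stratum. Throughout, I fix the previous-stratum facts as a common set of axioms present in both evaluations from the outset (this is what ``fixing the facts derived in previous strata'' licenses, and it is exactly the shared base on which the outer induction over strata in Theorem~\ref{thm:main-correctness} relies). I take ``eager evaluation derives $a_0$'' to mean that $a_0$ eventually lands in the database $DB_{p_0}$ for its predicate. Two structural facts about Algorithm~\ref{algo:eager} drive the argument: the databases are \emph{monotone} (tuples are only inserted, never removed, and insertion via \textsc{addIfAbsent} is thread-safe, so the successful insertions admit a well-defined linearization), and every enqueued work item is \emph{eventually processed} before the worklist empties and the threads terminate.

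For the base case, suppose $a_0$ has minimal proof height one, i.e.\ it is derivable by some rule $R$ all of whose body atoms match previous-stratum facts. Such an $R$ must be non-recursive in the current stratum (a recursive rule has a current-stratum body atom, which has no matching fact at the start), so $R$ contributes one of the \emph{initial} work items. When that item is processed, \textsc{evalRule} ranges over all tuple combinations in the current relations, which include the previous-stratum premises, so it yields $a_0$. For the inductive step, suppose $a_0$ has minimal proof height $h > 1$ with final inference using rule $R$ with head $p_0(\many{t}_0)$ and body $p_1(\many{t}_1), \dots, p_n(\many{t}_n)$, instantiated by premises $p_1(\many{n}_1), \dots, p_n(\many{n}_n)$. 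Each premise has smaller proof height, so by the induction hypothesis eager evaluation derives every current-stratum premise (the previous-stratum premises are given).

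The crux is a timing argument. Since $h > 1$, rule $R$ is recursive, so at least one premise lies in the current stratum; among those, let $p_\ell(\many{n}_\ell)$ be the \emph{last} to be successfully inserted (well defined by the linearization of insertions). At the moment it is inserted, \textsc{addIfAbsent} returns true, so the worker invokes \textsc{specializeRules} on $p_\ell(\many{n}_\ell)$; because $p_\ell(\many{t}_\ell)$ is an occurrence of predicate $p_\ell$ in the body of the recursive rule $R$, and $\many{n}_\ell$ is by construction a valid instance of $\many{t}_\ell$, specialization at this occurrence succeeds and enqueues a work item $R'$ with $p_\ell(\many{n}_\ell)$ moved to the front. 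By monotonicity, every other premise is already in its database at this instant: previous-stratum premises are always present, and the remaining current-stratum premises were inserted no later than $p_\ell(\many{n}_\ell)$. Since $R'$ is eventually processed and databases only grow, when \textsc{evalRule} runs on $R'$ all premises $p_i(\many{n}_i)$ remain present, so the join finds the combination $(\many{n}_1, \dots, \many{n}_n)$ and yields $p_0(\many{n}_0) = a_0$; the subsequent \textsc{addIfAbsent} leaves $a_0$ in $DB_{p_0}$. This closes the induction.

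I expect the main obstacle to be making the timing-and-concurrency reasoning rigorous, rather than the combinatorics of the join. The subtlety is that eager evaluation has no rounds: a work item created when $p_\ell(\many{n}_\ell)$ is derived reads the relations \emph{as they stand when it later runs}, not when it was enqueued, and other threads are concurrently mutating those relations. The argument must therefore be phrased against the linearized sequence of successful insertions, using monotonicity to argue that reading later can only see \emph{more} tuples (harmless for completeness, since extra tuples only add derivations) and using the thread-pool guarantee that no enqueued item is dropped. The remaining subtleties to discharge are that specialization at the chosen occurrence cannot fail (it unifies with a genuine instance) and that choosing the \emph{last-inserted} current-stratum premise as the trigger is precisely what guarantees all sibling premises are already visible; triggering on any earlier premise would not suffice.
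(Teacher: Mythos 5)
Your proof is correct, and it shares the paper's outer skeleton---strong induction on (minimal) derivation height, the same base case handling non-recursive rules via the initial work items, and the same observation that specialization to a genuine ground instance cannot fail---but the concurrency core is genuinely different. The paper's proof (Appendix~\ref{sec:correctness_proofs}) never fixes a global total order on insertions: it works with a transitive happens-before relation \happensbeforesym{} constrained by four axioms (per-object ordering of each read/write pair, program order within a work item, and write-before-spawned-dequeue visibility) and proves, by induction on the number of writes invisible to a work item (Lemma~\ref{lemma:key_completeness}), that if the work item triggered by one premise fails because the write $w_k$ of another premise is not visible, then the work item triggered by $w_k$ sees a \emph{strictly larger} set of writes; descending in this way, some work item must succeed, though the argument never identifies in advance which one. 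You instead posit a single linearization of all successful insertions consistent with real time, choose the \emph{last}-inserted current-stratum premise $p_\ell(\many{n}_\ell)$, and show by monotonicity that its spawned work item sees every sibling premise. This is simpler and pins down the succeeding work item explicitly, but it rests on a stronger hypothesis: a well-defined global ``last'' insertion across \emph{distinct} databases, which requires full linearizability (whose compositionality yields the global real-time-consistent order) together with the real-time chain write-then-enqueue-then-dequeue-then-read. Under the paper's weaker axioms---a partial order in which only same-object read/write pairs are required to be comparable---``last'' is not well defined and your trigger selection would not go through, which is precisely why the paper uses the descending-visibility argument instead. Since the paper itself notes that its assumptions are discharged by linearizable data structures, your proof is a valid and more direct argument under that standard practical assumption, at the cost of some generality.
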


\noindent Completeness is significantly more complex to prove than soundness; in fact, it relies on some (mild) assumptions about the behavior of the concurrent mechanisms used to implement Algorithm~\ref{algo:eager}.

Completeness is challenging to prove because eager evaluation is a concurrent algorithm that uses little explicit synchronization, and so there appears to be the possibility that data races result in the failure to derive a fact.
For example, consider evaluating the rule \lstinline;s(3) :- p(1), q(2);.
Call this rule $R$, and assume that the predicates \lstinline;p; and \lstinline;q; are recursive in the current stratum.
In a multithreaded setting, the derivation of the facts \lstinline;p(1); and \lstinline;q(2); might lead to two work items that are processed concurrently, one specialized to fact \lstinline;p(1); and one specialized to fact \lstinline;q(2);.
Because of data races between the writing and reading of relations, one might worry that \emph{both} work items fail to derive the fact \lstinline;s(3);---i.e., could it be that the derivation of the fact \lstinline;q(2); is not visible in the work item specialized to \lstinline;p(1); and vice versa?
Thankfully, no! If one work item fails because of a data race, the other one necessarily succeeds.
Here, we informally demonstrate why eager evaluation is complete on this example; see \iffull Appendix~\ref{sec:correctness_proofs} \else \ifanonymous the supplemental material \else the extended version of this paper~\citep{full} \fi \fi for a formal proof of the general case.

When facts \lstinline;p(1); and \lstinline;q(2); are derived, these events occur (line numbers refer to Algorithm~\ref{algo:eager}):
\begin{itemize}
  \item Write $w_{1,p}$ (line 12): the fact \lstinline;p(1); is added to data structure $DB_p$.
        Subsequently, Work Item 1 specializing rule $R$ on \lstinline;p(1); is submitted to the thread pool (line 14).
  \item Execution start $x_{1,p}$ (line 4 or 7): Work Item 1 is dequeued to be executed.
  \item Read $r_{1,q}$ (line 19): the data structure $DB_q$ is read when Work Item 1 is processed.
  \item Write $w_{2,q}$ (line 12): the fact \lstinline;q(2); is added to data structure $DB_q$.
        Subsequently, Work Item 2 specializing rule $R$ on \lstinline;q(2); is submitted to the thread pool (line 14).
  \item Execution start $x_{2,q}$ (line 4 or 7): Work Item 2 is dequeued to be executed.
  \item Read $r_{2,p}$ (line 19): the data structure $DB_p$ is read when Work Item 2 is processed.
\end{itemize}
The completeness of eager evaluation rests on a transitive happens-before relationship \happensbeforesym between these events; we require the happens-before relationship to meet the following conditions:
\begin{enumerate}
  \item Reads and writes on the same data structure are ordered: $$(\happensbefore{r_{1,q}}{w_{2,q}} \vee \happensbefore{w_{2,q}}{r_{1,q}}) \wedge (\happensbefore{r_{2,p}}{w_{1,p}} \vee \happensbefore{w_{1,p}}{r_{2,p}}).$$
        This condition would be satisfied by using linearizable concurrent data structures~\cite{Herlihy1990Linearizability}, a common notion of concurrent data structure correctness.
  \item Reads that occur in a work item happen after that work item starts executing: $\happensbefore{x_{1,p}}{r_{1,q}} \wedge \happensbefore{x_{2,q}}{r_{2,p}}$.
        In general, the happens-before relation should respect program order.
  \item Writes happen before the work items they spawn: $\happensbefore{w_{1,p}}{x_{1,p}} \wedge \happensbefore{w_{2,q}}{x_{2,q}}$.
        This condition is a basic visibility requirement.
        In the pseudocode, it would be sufficient if each deque $D_i$ is linearizable; in practice, a thread pool would likely guarantee this condition by performing some form of synchronization when it handles the distribution of work items.
\end{enumerate}
These mild assumptions are reasonable to meet in practice.

To show that fact \lstinline;s(2); is derived, it is enough to show that \happensbefore{w_{1,p}}{r_{2,p}} holds or \happensbefore{w_{2,q}}{r_{1,q}} holds---that is, the write of \lstinline;p(1); is visible in Work Item 2 or the write of \lstinline;q(2); is visible in Work Item 1.
By Assumption (1), we have the disjunction $\happensbefore{r_{1,q}}{w_{2,q}} \vee \happensbefore{w_{2,q}}{r_{1,q}}$.
If the second disjunct is true, Work Item 1 will derive fact \lstinline;s(2);.
Otherwise, we have (read left-to-right, top-to-bottom):
\[
  \begin{array}{lrlrlr}
    \happensbefore{w_{1,p}}{x_{1,p}} & \text{(Assumption (3))} &
    \happensbefore{x_{1,p}}{r_{1,q}} & \text{(Assumption (2))} &
    \happensbefore{r_{1,q}}{w_{2,q}} & \text{(Disjunct case)}    \\
    \happensbefore{w_{2,q}}{x_{2,q}} & \text{(Assumption (3))} &
    \happensbefore{x_{2,q}}{r_{2,p}} & \text{(Assumption (2))}   \\
  \end{array}
\]
Thus, by transitivity, we have that \happensbefore{w_{1,p}}{r_{2,p}} holds, and Work Item 2 will derive fact \lstinline;s(2);.

\section{Eager Evaluation in Practice}\label{sec:eagerimpl}

This section discusses our experience adding eager evaluation to the Formulog interpreter (Section~\ref{sec:eagerevalinterpret}) and \souffle's code generator (Section~\ref{sec:eagerevalsouffle}), and shows that eager evaluation is a practical and effective strategy for evaluating Formulog programs (Section~\ref{sec:eagerevalresults}): the eager evaluation mode of the \emph{interpreter} achieves a mean speedup of 5.2$\times$ over the semi-naive code generated by \souffle on SMT-heavy benchmarks, and the eager evaluation extension to (compiled) \souffle achieves a mean 1.8$\times$ speedup over the Formulog interpreter's eager evaluation mode and a mean 7.6$\times$ speedup over off-the-shelf \souffle.
Thus, Formulog is an example of a Datalog variant that benefits from a non-standard evaluation technique; moreover, this technique can be built on top of existing Datalog infrastructure with relatively low effort.

\subsection{Extending the Formulog Interpreter}\label{sec:eagerevalinterpret}

Our implementation of eager evaluation in the Formulog interpreter consists of two parts: a short method that reorders the bodies of semi-naive rules so that $\delta$-predicates come first,%
\footnote{
  Our implementations {\em lazily} specialize rules to new facts---i.e., during work item evaluation, not during work item creation (as in Algorithm~\ref{algo:eager}).
  Recall that $\delta$-relations are auxiliary relations used to focus traditional semi-naive evaluation on newly derived tuples (Section~\ref{sec:datalog}).
  When a work item is processed, the given rule is run on a singleton, mock $\delta$-relation containing just the tuple to specialize on.
  Since the mock $\delta$-predicate is the first atom, this has the effect of specializing the rule.
}
and a class that evaluates a stratum of the Formulog program using eager evaluation (200 lines of Java).
The classes for eager evaluation and semi-naive evaluation both descend from a shared abstract class (200 lines); both use a work-stealing thread pool (a \lstinline;java.util.concurrent.ForkJoinPool; instance) to handle parallelism; and both store relations in the same indexed data structures built around \lstinline;java.util.concurrent.ConcurrentSkipListSet; instances.
The remaining logic (and code) in each class is also very similar, with the main difference occurring when a new tuple is derived: in eager evaluation, new work items are submitted corresponding to rules specialized on that tuple; in semi-naive evaluation, the new tuple is added to an auxiliary relation.

\subsection{Extending \souffle's Code Generator}\label{sec:eagerevalsouffle}

Our extension to Souffle's code generator consists of three parts: a short class for reordering semi-naive rule bodies so that $\delta$-predicates are first, modifications to the code generator for semi-naive rules, and modifications to the code generator for the data structures representing relations.
The latter two occur during the translation from relational algebra machine (RAM) instructions to \cc.
RAM instructions capture high-level imperative operations, like scanning the index of a data structure, or inserting a tuple into a relation.
To support eager evaluation, we change the translation of some RAM instructions.
For example, we change insert instructions to insert a new tuple into the base relation (instead of an auxiliary relation), and to submit new work items representing rules specialized on that tuple.
We also modify code generation so that the logic for evaluating a rule is wrapped up in its own function; consequently, a work item's single operation is to invoke a ``rule function'' on the new tuple.
Work items are submitted to a \lstinline;task_group; instance from Intel's oneAPI Thread Building Blocks (oneTBB) concurrency library%
\footnote{\url{https://www.intel.com/content/www/us/en/developer/tools/oneapi/onetbb.html}}%
; this uses a work-stealing thread pool on the backend to process tasks.

In addition to modifying code generation for rule evaluation, we modify code generation for the data structures representing relations.
This is required because of a limitation in \souffle's concurrent data structures, which do not support concurrent reads and writes.
Because there are distinct reading and writing phases in semi-naive evaluation, the \souffle implementors have added only the synchronization necessary for multiple reads happening simultaneously or multiple writes happening simultaneously.
To get around this, we use instances of oneTBB's \lstinline;concurrent_set; class for indices where necessary.
These sets are based on concurrent skip lists and are much slower than \souffle's specialized B-trees and tries.
Accordingly, we use oneTBB's concurrent sets for an index only if it is accessed when evaluating a non-``delta'' recursive predicate in a rule body (a necessary condition for an index being read and written concurrently).
Programs with only linearly recursive rules can thus get away with using just \souffle's data structures.

All in all, our modifications required adding just 500 lines of \cc to \souffle (which is 70k lines of \cc, including code not involved in compilation).
The eager evaluation code generator can in principle be used to compile arbitrary Datalog programs, and is agnostic to the fact that we use it as part of a compiler for Formulog programs.
However, our extension does not currently support all of \souffle's RAM instructions (we focus on those used during our case studies), nor all of \souffle's additional machinery (e.g., provenance tracking).
To link with code generated for eager evaluation, we made minor modifications to the Formulog \cc runtime (for example, using oneTBB's abstraction for thread-local variables instead of OpenMP's).

\subsection{Evaluation}\label{sec:eagerevalresults}

\begin{figure}[t]
  \includegraphics{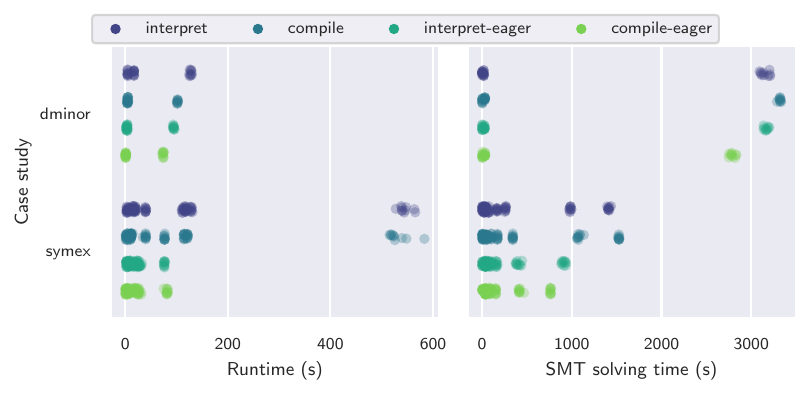}
  \caption{
    Eager evaluation achieves speedups on SMT heavy benchmarks over semi-naive evaluation, primarily by improving SMT solving times (eager evaluation's finer-grained parallelism also helps to a small degree).
    Note that the \emph{CPU time} of SMT solving (as shown in the right plot) does not capture the full extent to which eager evaluation's redistribution of SMT work can lead to lower SMT solving \emph{wall-clock time} (e.g., eager evaluation might induce two SMT queries to be solved in parallel instead of sequentially; the CPU time spent in SMT solving would be the same, but the wall-clock time would be less).
  }
  \label{fig:eager_eval}
\end{figure}

We first summarize the experimental results across the benchmark suite (Section~\ref{sec:eager_results_summary}), and then detail the results for each case study (Sections~\ref{sec:eager_dminor} and~\ref{sec:eager_symex}).
We also evaluate how well eager evaluation scales with the number of threads compared to semi-naive evaluation (Section~\ref{sec:scaling}).

\subsubsection{Summary}\label{sec:eager_results_summary}
Eager evaluation leads to speedups on SMT-heavy benchmarks (Figure~\ref{fig:eager_eval}).
In particular, on the \bm{dminor} refinement type checker and the \bm{symex} symbolic executor, the interpreter's eager evaluation mode (``\bm{interpret-eager}'') achieves a 5.2$\times$
mean speedup (min\sep{}median\sep{}max: \speedups{0.41}{1.2}{31}) over the code generated by our initial compiler (``\bm{compile}''); in turn, using the eager evaluation extension to \souffle (``\bm{compile-eager}'') achieves a 1.8$\times$ mean speedup (min\sep{}median\sep{}max: \speedups{0.93}{1.3}{4.1}) over the \bm{interpret-eager} mode.
This section omits results for the \bm{scuba} points-to analysis case study, where eager evaluation is not remotely competitive with our initial compiler to off-the-shelf \souffle.
The \bm{scuba} case study uses little SMT solving and is most similar to traditional Datalog workloads; additionally, eager evaluation reorders rules so that $\delta$-predicates come first, which happens to be a very inefficient join ordering strategy here.

Eager evaluation uses more memory than semi-naive evaluation:
\bm{interpret-eager} uses mean 1.6$\times$ as much memory as the baseline interpreter (``\bm{interpret}'') (min\sep{}median\sep{}max: \speedups{0.96}{1.1}{5.2});
\bm{compile-eager} uses mean 1.1$\times$ as much memory as \bm{compile} (min\sep{}median\sep{}max: \speedups{0.92}{1.1}{1.5}).

Eager evaluation's approach to parallelism is finer-grained than the parallel \lstinline;for; loops of typical implementations of semi-naive evaluation, and in general the eager evaluation modes demonstrate higher CPU utilization in our experiments (albeit with some variability).
\bm{Interpret-eager} uses mean 3.9$\times$ as many CPUs as \bm{interpret} (min\sep{}median\sep{}max: \speedups{0.93}{1.9}{12}); \bm{compile-eager} uses mean 1.6$\times$ as many CPUs as \bm{compile} (min\sep{}median\sep{}max: \speedups{0.19}{1.1}{7.1}).

We also measured the amount of ``work'' performed by each strategy.
We define this quantity as the number of tuple accesses made during the evaluation of a rule; if evaluation takes the form of a nested \lstinline;for; loop where loops iterate over relation contents, it is the total number of iterations across all loops.
Given a fixed program and input, semi-naive evaluation will always perform the same amount of work (which will, in general, contain some redundant combinations of tuples).
Not so for eager evaluation, where the outcome of data races can lead to a different amount of work being performed (which might happen to be more or less than the work done by semi-naive evaluation).
In our experiments, \bm{interpret} and \bm{interpret-eager} do almost exactly the same amount of work, but \bm{compile} does slightly more work (1.1-1.2$\times$) than \bm{compile-eager}.%
\footnote{For the \bm{dminor} and \bm{symex} case studies, \bm{interpret} and \bm{compile} modes both use a $\delta$-predicate-first heuristic for reordering atoms in rule bodies, as that ordering is more efficient on these benchmarks than the default left-to-right order. Since this is consistent with the atom ordering used by eager evaluation, comparing the amount of work done by semi-naive and eager modes is apples-to-apples (in general, the amount of work done in a rule depends on the order of body atoms).}

\subsubsection{Refinement Type Checking (\bm{dminor})}\label{sec:eager_dminor}

\bm{Interpret-eager} has a mean 1.3$\times$ speedup (min\sep{}median\sep{}max: \speedups{1.1}{1.3}{1.7}) over \bm{compile} on the \bm{dminor} case study, while \bm{compile-eager} has a mean 2.8$\times$ speedup (min\sep{}median\sep{}max: \speedups{1.3}{3.0}{4.1}) over \bm{interpret-eager}.

Eager evaluation is not universally more effective than \bm{compile} at grouping similar SMT calls on the same thread: \bm{compile} has mean 1.2$\times$ (min\sep{}median\sep{}max: \speedups{0.97}{1.2}{1.3}) as many SMT cache misses as \bm{interpret-eager}, but 0.94$\times$ (min\sep{}median\sep{}max: \speedups{0.80}{1.0}{1.0}) as many as \bm{compile-eager}.%
\footnote{
  The Formulog runtime uses encoding tricks to maintain a per-solver cache of conjuncts~\citep{formulog_incremental_smt}; we count an SMT cache miss for each relevant conjunct that is not in the cache when an SMT call is made.
}
This suggests that SMT calls in this case study do not follow the pattern identified in Section~\ref{sec:eagereval}.
Despite this, eager evaluation can still lead to better SMT solving CPU times than \bm{compile}: SMT solving can be up to 1.2$\times$ faster with both \bm{interpret-eager} and \bm{compile-eager}.
Thus, eager evaluation leads to an advantageous distribution of the SMT workload, despite taking less advantage of SMT caching.
This perhaps reflects the difficulty of predicting SMT solver performance, which can vary widely with small perturbations to solver state and query framing (e.g., the choice of SMT variable names can affect how long solving takes and whether the result is \lstinline;sat; or \lstinline;unknown;%
\footnote{See \url{https://github.com/Z3Prover/z3/issues/909/} and \url{https://github.com/Z3Prover/z3/issues/4600}.}%
).

\subsubsection{Symbolic Execution (\bm{symex})}\label{sec:eager_symex}

\bm{Interpret-eager} achieves a mean 6.3$\times$ speedup (min\sep{}median\sep{}max: \speedups{0.4}{1.1}{31}) over \bm{compile} on the \bm{symex} benchmarks, while \bm{compile-eager} manages a 1.5$\times$ speedup (min\sep{}median\sep{}max: \speedups{0.9}{1.2}{3.5}) over \bm{interpret-eager}.

The SMT story is more straightforward for this case study: eager evaluation more effectively groups together similar SMT calls on the same thread, and this leads to faster SMT solving times.
In particular, \bm{compile} has mean 2.5$\times$ (min\sep{}median\sep{}max: \speedups{0.34}{1.4}{5.7}) as many SMT cache misses as \bm{interpret-eager} and mean 5.7$\times$ (min\sep{}median\sep{}max: \speedups{0.3}{3.6}{18}) as many as \bm{compile-eager}.
In terms of CPU time spent SMT solving, this gives a mean speedup of 1.4$\times$ over \bm{compile} for both \bm{interpret-eager} (min\sep{}median\sep{}max: \speedups{0.28}{1.5}{2.7}) and \bm{compile-eager} (min\sep{}median\sep{}max: \speedups{0.24}{1.6}{2.8}).
Thus, as anticipated in Section~\ref{sec:eagereval}, symbolic execution discharges SMT queries in a pattern well matched to eager evaluation.

\begin{figure}[t]
  \includegraphics{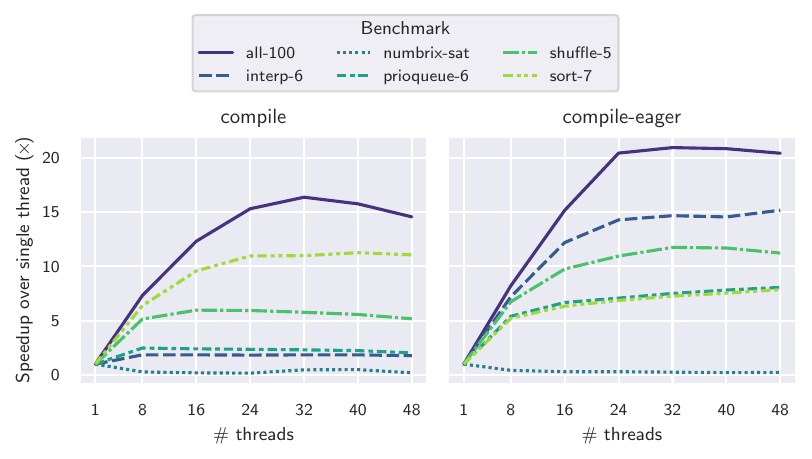}
  \caption{Eager evaluation typically takes better advantage of additional threads than semi-naive evaluation, achieving higher speedups (relative to single-threaded evaluation) for a given thread count, and continuing to improve performance past the thread count where semi-naive evaluation's performance stalls or decreases.}
  \label{fig:scaling}
\end{figure}

\subsubsection{Scaling}\label{sec:scaling}

We also tested how well eager evaluation and semi-naive evaluation scale as the number of threads is varied.
We ran the compile and compile-eager modes on six benchmarks: the longest-running \bm{dminor} benchmark (\bm{all-100}), and the longest-running configuration of each of the five template \bm{symex} benchmarks (\bm{interp-6}, \bm{numbrix-sat}, \bm{prioqueue-6}, \bm{shuffle-5}, and \bm{sort-7}).%
\footnote{We omit the interpreted Formulog modes on the basis that they are less optimized than the compiled modes, and so how well they scale is a less interesting question.}
We varied the number of threads between 1, 8, 16, 24, 32, 40, and 48, with five trials per configuration.
These experiments were run on an Ubuntu 22.04 c5.12xlarge Amazon Web Services EC2 with an Intel Xeon processor clocked at \SI{3.6}{GHz}, 24 physical cores (48 vCPUs), and \SI{96}{GiB} RAM.

At the baseline of using a single thread, compile-eager outperforms compile by a mean speedup of 1.7$\times$ (min\sep{}median\sep{}max: \speedups{0.9}{1.2}{4.3}).
This speedup is slightly less than the mean 2.1$\times$ speedup compile-eager has over compile in terms of SMT solving times (min\sep{}median\sep{}max: \speedups{0.9}{1.5}{5.2}), an indication of the overhead of eager evaluation relative to semi-naive evaluation.

On most benchmarks, compile-eager scales better than semi-naive evaluation, achieving higher speedups over single-threaded evaluation as the number of threads is varied (Figure~\ref{fig:scaling}).
Compile-eager is typically able to more fully take advantage of additional threads, achieving higher speedups than compile for a given thread count, and continuing to get performance improvements past the thread count where compile's performance begins to stall or decrease.
Nonetheless, most of compile-eager's performance is achieved by 32 threads, with diminishing returns beyond that point.

Two benchmarks are outliers to the general trend.
First, single-threaded evaluation is best for both evaluation modes on the symex benchmark \bm{numbrix-sat}, which has a single possible program path: additional threads lead to more overhead, with no performance gain (there are no branches for symbolic execution to explore in parallel).
Second, on the \bm{symex} benchmark \bm{sort-7}, compile achieves higher speedups than compile-eager.
On this benchmark, compile-eager's SMT solving CPU times are about twice as fast as compile's; nonetheless, its overall times are slightly slower.
This benchmark generates many (28 million) derived tuples, and it could be that eager evaluation's approach of generating one work item per tuple introduces too much overhead.
This result suggests the potential of a hybrid evaluation scheme that uses some amount of batching to reduce overhead, while still achieving better SMT solver locality than semi-naive evaluation.

\definecolor{Gray}{gray}{0.9}

\begin{table}
  \caption{
    On 22/23 benchmarks, a compiled Formulog mode beats any Formulog interpreter mode; on 20/23 benchmarks, some Formulog mode beats the non-Datalog reference implementation (see Section~\ref{sec:ref_compare}; for the \bm{symex} case study, we compare against KLEE~\citep{klee}).
    The best Formulog time on a benchmark is \textit{italicized}; the best overall time for a benchmark is in \textbf{bold}. Timeouts were 30 minutes.
    In the case that a Formulog configuration timed out on its first trial, we did not run additional trials.
    SMT-heavy case studies are shaded \colorbox{Gray}{gray}.
  }\label{tab:results}
  \begin{tabular}{ll rr rrr rr}
    \toprule
               &               & \multicolumn{2}{c}{Formulog interp. (s)} &                   & \multicolumn{2}{c}{Formulog compile (s)} &                     & Reference                           \\
    \cline{3-4} \cline{6-7}
    Case study & Benchmark     & semi-naive                               & eager             &                                          & semi-naive          & eager             &  & impl. (s)    \\
    \midrule
    \rowcolor{Gray}
    dminor     & all-1         & 4.76                                     & 2.46              &                                          & 4.10                & {\bf {\em 0.60}}  &  & 1.50         \\
    \rowcolor{Gray}
    dminor     & all-10        & 17.05                                    & 4.19              &                                          & 5.38                & {\bf {\em 1.40}}  &  & 68.00        \\
    \rowcolor{Gray}
    dminor     & all-100       & 128.00                                   & 94.56             &                                          & 102.02              & {\bf {\em 74.05}} &  & TO           \\
    scuba      & antlr         & 310.38                                   & TO                &                                          & {\bf {\em 93.89}}   & 1015.10           &  & 95.11        \\
    scuba      & avrora        & 387.19                                   & TO                &                                          & {\em 137.79}        & TO                &  & {\bf 85.16}  \\
    scuba      & hedc          & 249.55                                   & TO                &                                          & {\bf {\em 65.00}}   & 272.36            &  & 81.01        \\
    scuba      & hsqldb        & 244.84                                   & TO                &                                          & {\bf {\em 64.38}}   & 431.53            &  & 68.37        \\
    scuba      & luindex       & 1338.49                                  & TO                &                                          & {\bf {\em 1045.64}} & TO                &  & error        \\
    scuba      & polyglot      & 243.34                                   & 1246.84           &                                          & {\bf {\em 56.66}}   & 164.78            &  & 75.15        \\
    scuba      & sunflow       & 1041.81                                  & TO                &                                          & {\em 495.98}        & TO                &  & {\bf 279.81} \\
    scuba      & toba-s        & 257.02                                   & TO                &                                          & {\bf {\em 75.21}}   & 312.01            &  & 76.63        \\
    scuba      & weblech       & 286.23                                   & TO                &                                          & {\bf {\em 77.66}}   & 799.52            &  & 97.22        \\
    scuba      & xalan         & 285.88                                   & TO                &                                          & {\em 84.56}         & 1057.24           &  & {\bf 72.64}  \\
    \rowcolor{Gray}
    symex      & interp-5      & 117.64                                   & 6.80              &                                          & 121.24              & {\bf {\em 4.18}}  &  & 46.81        \\
    \rowcolor{Gray}
    symex      & interp-6      & 541.73                                   & 16.93             &                                          & 523.55              & {\bf {\em 14.23}} &  & 176.50       \\
    \rowcolor{Gray}
    symex      & numbrix-sat   & 19.84                                    & 28.72             &                                          & {\bf {\em 11.87}}   & 26.79             &  & 302.25       \\
    \rowcolor{Gray}
    symex      & numbrix-unsat & 16.79                                    & 23.02             &                                          & {\bf {\em 10.13}}   & 22.84             &  & 160.03       \\
    \rowcolor{Gray}
    symex      & prioqueue-5   & 39.72                                    & 7.89              &                                          & 39.81               & {\bf {\em 5.33}}  &  & 41.11        \\
    \rowcolor{Gray}
    symex      & prioqueue-6   & 128.35                                   & 24.48             &                                          & 114.72              & {\bf {\em 19.91}} &  & 199.34       \\
    \rowcolor{Gray}
    symex      & shuffle-4     & 3.06                                     & 2.25              &                                          & 1.62                & {\bf {\em 0.64}}  &  & 49.80        \\
    \rowcolor{Gray}
    symex      & shuffle-5     & 6.30                                     & 3.79              &                                          & 4.46                & {\bf {\em 2.06}}  &  & TO           \\
    \rowcolor{Gray}
    symex      & sort-6        & 15.04                                    & 9.62              &                                          & 10.80               & {\bf {\em 9.11}}  &  & 140.20       \\
    \rowcolor{Gray}
    symex      & sort-7        & 112.56                                   & {\bf {\em 76.25}} &                                          & 76.77               & 82.02             &  & 1607.07      \\
    \bottomrule
  \end{tabular}
\end{table}

\section{Formulog Performance: Present and Future}\label{sec:discussion}

This section first contextualizes the performance of new-and-improved Formulog relative to non-Datalog reference implementations (Section~\ref{sec:ref_compare}), and then proposes future directions (Section~\ref{sec:future}).

\subsection{Putting Formulog's Performance in Context}\label{sec:ref_compare}

In the original Formulog paper,~\citet{formulog} compare the performance of Formulog to reference implementations that perform analyses similar to the case studies, but are written in traditional languages (F$^\sharp$, Java, and \cc).
While these comparisons are not apples-to-apples (implementations might use different heuristics, SMT encodings, etc.), they do provide wider context for Formulog's performance.
How does Formulog stack up now given compilation and eager evaluation?
The numbers are promising: our performance improvements to Formulog substantially close the gap in cases where Formulog previously significantly trailed the reference implementations, and translate to increased speedups over the reference implementations in other cases (Table~\ref{tab:results}).

Except for the Dminor reference implementation, we ran the reference implementations on an m5.12xlarge AWS EC2 instance, using the same version of Z3 as in our Formulog experiments.
The Dminor reference implementation can run only on an old version of the .NET platform, constrained both in terms of available memory and available Z3 versions; instead of running the Dminor reference implementation again, we use the numbers provided by~\citet{formulog}.

\paragraph{Refinement Type Checking (\bm{dminor})}

The \bm{dminor} case study has three benchmarks, consisting, respectively, of one, 10, and 100 copies of the same skeleton Dminor program $D_{all}$ (a composite of all publicly available Dminor programs that use only the core feature set).
The reference type checker of~\citet{dminor} is written in F$^\sharp$ and uses an optimization not implemented in the Formulog version; nonetheless, Formulog \bm{compile-eager} consistently beats it.
The reference implementation takes 1.5 seconds to type check one copy of $D_{all}$; in \bm{interpret} mode, Formulog takes 4.8 seconds; in \bm{compile-eager} mode, it takes 0.61 seconds.
For the 10-copy version, the reference implementation takes 68 seconds, Formulog \bm{interpret} takes 17 seconds, and Formulog \bm{compile-eager} takes 1.4 seconds.
The Formulog versions scale to type checking 100 copies of $D_{all}$ (\bm{interpret}: 128 seconds; \bm{compile-eager}: 74 seconds); the reference implementation does not, timing out after 100 minutes (perhaps hamstrung by the old .NET platform it runs on).

\paragraph{Bottom-Up Points-To Analysis (\bm{scuba})}

The reference implementation of~\citet{bottomupPointsto}, written in Java, achieves a mean $3.5\times$ speedup over Formulog \bm{interpret} (min\sep{}median\sep{}max: \speedups{2.9}{3.4}{4.5}) on nine substantial Java benchmarks.%
\footnote{We omit the benchmark \bm{luindex}, on which the reference implementation of~\citet{bottomupPointsto} fails with an out-of-bounds access.
  \citet{formulog} report that the reference implementation completes on this benchmark in 100 seconds.
  The experimental setups differ in the implementation of the JDK 7 that is analyzed as part of the points-to analyses; we believe this leads to the discrepancy here.
}
By compiling the Formulog version to \souffle, we reduce this speedup to mean $1.1\times$ (min\sep{}median\sep{}max: \speedups{0.75}{0.98}{1.8}); that is, Formulog \bm{compile} performs essentially on par with the reference implementation.
The Formulog version closely follows the mathematical specification of the analysis, while integrating some of the heuristics used in the reference implementation, which is significantly more complex (and 10$\times$ larger in SLOC).

\paragraph{Symbolic Execution (\bm{symex})}

We compare against the symbolic executor KLEE (v3.0, released June 2023)~\citep{klee}, set to analyze LLVM 13.0 bitcode (the Formulog-based bounded symbolic executor analyzes LLVM 7.0 bitcode).
Formulog \bm{interpret} achieves a mean $35\times$ speedup over KLEE (min\sep{}median\sep{}max: \speedups{0.33}{9.4}{290}), an improved performance compared to the numbers reported in the original Formulog paper.
Formulog \bm{compile-eager} boosts this speedup up to mean $100\times$ (min\sep{}median\sep{}max: \speedups{7.0}{12}{870}).
These numbers should be taken with a grain of salt: the Formulog-based symbolic executor only handles a small subset of LLVM, whereas KLEE is an industrial-strength tool.
Although it implements a different analysis---C bounded model checking instead of LLVM symbolic execution---for the sake of completeness we also compare against the tool CBMC (v5.85.0, released June 2023)~\citep{cbmc}: it achieves a mean $200\times$ speedup over Formulog \bm{compile-eager} (min\sep{}median\sep{}max: \speedups{0.26}{9.2}{890}).
We explicitly set the number of times to unwind loops on two benchmarks to force CBMC to complete.

\paragraph{Discussion}
The performance improvements developed in this paper were enabled by Formulog's high-level design, a central desideratum being that additional features---functional programming, SMT solving, and relation reification---should not get in the way of evaluating and optimizing Formulog programs just as if they were Datalog programs.
First, this allows a direct translation from Formulog to a high-performance Datalog platform, \souffle.
In the future, we could translate to different Datalog engines (e.g., Differential Datalog~\citep{ryzhyk2019}), as long as they have an FFI and provide some control over the order in which predicates in a rule are evaluated (well typedness in Formulog depends on predicate order; our implementations of eager evaluation account for this).
Second, it means that we can develop new \emph{Formulog} evaluation algorithms by developing new \emph{Datalog} evaluation strategies; e.g., by developing eager Datalog evaluation, we get eager Formulog evaluation.
Compared to designing Formulog evaluation algorithms directly, this has the advantage that Datalog has fewer moving parts (and is thus easier to reason about) and is extensively studied.
It also enabled us to relatively easily add eager evaluation to \souffle, as we were able to reuse much of the \souffle codebase, and did not have to make \souffle aware of Formulog semantics.

\subsection{Future Directions}\label{sec:future}

\paragraph{Hybrid Evaluation Schemes}
There is a spectrum between semi-naive and eager evaluation strategies.
On the one hand, it is possible to make eager evaluation a little less eager by introducing some degree of batching; for example, the facts derived during the evaluation of a rule could always be grouped together into fixed-size batches.
On the other hand, it is also possible to use eager evaluation and semi-naive evaluation within the same program; for instance, eager evaluation could be used for Formulog strata that contain SMT calls, and semi-naive evaluation could be used elsewhere.
Furthermore, although it would require a bit more coordination, eager and semi-naive evaluation can be combined in the same stratum, so that different approaches are used for different rules.
Hybrid evaluation schemes would be particularly relevant when integrating a component written in Formulog (and compiled to \souffle) into an existing \souffle project: one might want to use eager evaluation just for the Formulog part, and semi-naive evaluation everywhere else.

\paragraph{More Sophisticated Distribution of SMT Queries}

Formulog currently associates a single SMT solver with each Datalog evaluation thread, and discharges all SMT queries arising on that thread to that solver.
This tightly binds the distribution of SMT queries across SMT solvers to the distribution of inference tasks across Datalog evaluation threads.
While this makes it possible to achieve more favorable distributions of the SMT workload just by changing the Datalog evaluation algorithm (viz. eager evaluation), it is also a limitation, as it rules out other possible ways of distributing the SMT workload.
For example, one could imagine having a \emph{shared} pool of SMT solvers: each Datalog worker thread submits queries to the pool, and the pool decides the ``best'' solver to run it on (perhaps based on the state of the solver, or some characteristic of the query).
The pool might choose to allocate multiple cores/solvers to particularly ``hard'' queries using a portfolio approach~\citep{portfolio_smt}.
However, even with more sophisticated schemes for distributing SMT queries, eager evaluation would still be useful, since it is ultimately the Datalog evaluation strategy that determines the order in which SMT queries are produced.

\paragraph{Applying Eager Evaluation Beyond Formulog.}

Eager evaluation's DFS of the logical inference space could be beneficial to Datalog systems (beyond Formulog) that interact with stateful external systems whose performance depends on the distribution and order of queries.
That interaction might be through a general foreign function interface or via specialized, built-in support.
For example, \souffle supports relations stored in external SQLite databases.
The current implementation of \souffle eagerly reads the entire relation into memory during initialization, but an alternate implementation might read only a slice of the database relation into memory at any one time (e.g., if the relation is too large to fit into memory); in this case, performance depends on the temporal and spatial locality of database queries.
Another example is Vieira~\citep{Li2024Relational}, a probabilistic variant of Datalog with built-in predicates that discharge queries to external foundation models.
Vieira currently interacts with foundation models using only one-off (context-free) calls; in an alternate model of neurosymbolic programming, Vieira could use stateful APIs for foundation models (such as OpenAI's Assistants API%
\footnote{\url{https://platform.openai.com/docs/assistants/overview}}%
) so that each foundation model query is part of the same conversation.
In this case, the way a foundation model answers a question would depend on the flow of the conversation up to that point---a function of how the Datalog inference space has been explored.

\section{Related Work}

\paragraph{Pipelined Semi-Naive Evaluation}
Eager evaluation is similar in spirit to pipelined semi-naive evaluation (PSN), introduced by \citet{declarativenetworking} for Datalog-based declarative networking.
Eager evaluation and PSN are both worklist-based algorithms that evaluate rules using one ``delta'' tuple at a time.
However, PSN is designed for a distributed, multi-node setting, whereas eager evaluation is designed for a single-node, multi-threaded setting.
Despite its distributed setting, PSN is crafted so that rule evaluation is localized, making it easy to prove the algorithm's completeness; in contrast, eager evaluation does not restrict which thread a computation is performed on, leading to data races and a subtle completeness argument.
PSN timestamps facts to avoid doing redundant work compared to traditional semi-naive evaluation; eager evaluation could do something similar, although we suspect this would lead to unacceptable overhead on the workloads Formulog targets.

\paragraph{Executing Datalog Variants}
Non-distributed Datalog variants typically strive to work with semi-naive evaluation~\citep{flix,datafunseminaive,formulog,szabo2021incremental,ascent,Sahebolamri2023Bring}.
Because semi-naive evaluation is the most widely used evaluation strategy for Datalog, being able to support semi-naive evaluation is seen as validating the practicality of the variant (e.g., there is an entire POPL paper on semi-naive evaluation for Datafun~\citep{datafunseminaive}).
Our work demonstrates that, even in a single-node setting, semi-naive evaluation is not the optimal evaluation strategy for every Datalog variant.

\paragraph{Traversal Orders in Logic Programming}
We are not aware of prior work that performs a DFS of the logical inference space during bottom-up evaluation (except for PSN~\citep{declarativenetworking}, which has this capability).
However, top-down logic programming systems like Prolog typically explore SLD resolution trees using DFS~\citep{gallier1986logic}.
Top-down evaluation proceeds from the head of a rule to the body---i.e., starts with a conclusion, and then searches for a justification.
Bottom-up evaluation works the opposite way by combining known facts to make new inferences.
Given a query to prove, the magic set transformation~\citep{magicsets1,magicsets2} rewrites a logic program so that bottom-up evaluation derives only facts relevant to proving that query (simulating top-down evaluation); while this limits which logical inferences are made, those inferences are still made in a breadth-first order (assuming traditional semi-naive evaluation).
In principle, bottom-up evaluation algorithms could pursue traversals through the logical inference space different from breadth-first and depth-first search.
For example, an evaluation strategy that assigns a priority to each derived fact, and processes facts in this order, could be used to guide potentially non-terminating computations---like unbounded symbolic execution~\citep{klee} or enumerative program synthesis~\citep{alur2013syntax}---towards fruitful directions.

\paragraph{Datalog Compilation}

Compilation is a popular strategy for speeding up Datalog and its variants.
\souffle~\citep{souffle1,souffle2} spearheaded this approach by compiling Datalog to an abstract instruction set that is then compiled to \cc.
The Flix implementation~\citep{flix2022} compiles Horn clauses to an abstract machine; Differential Datalog~\citep{ryzhyk2019}, Crepe~\citep{Zhang2020Crepe}, and Ascent~\citep{ascent} compile them to Rust (the latter two using macro-based approaches).
\citet{pacak2022functional} compile Functional IncA---a functional language interpreted under a non-standard, fixpoint semantics---to Datalog.
Our approach (Section~\ref{sec:compiler}) is hybrid, as we compile Formulog Horn clauses to traditional Datalog, but Formulog functional code directly to \cc.
Flan~\citep{Abeysinghe2024Flan} is a new compilation framework for Datalog that leverages Lightweight Modular Staging~\cite{Rompf2010Lightweight} to flexibly produce high performing code.
As future work, the Flan authors suggest using Flan to construct something akin to a ``compiled Formulog.''
We are excited about this direction, as Flan's flexibility would make it easier to combine Formulog with additional language features (like lattices).
Our work suggests that Flan would need to support eager evaluation to achieve top performance on SMT-heavy workloads.

\section{Summary}

This paper has explored both off-the-shelf technologies and custom techniques to speed up Formulog, a domain-specific language that combines Datalog, SMT solving, and functional programming.
Compiling to off-the-shelf \souffle provides solid (mean 2.2$\times$) speedups; an additional advantage comes via \emph{eager evaluation}, a novel strategy that structures Datalog evaluation as a worklist algorithm and achieves a quasi-depth-first search through the logical inference space by submitting inference tasks to a work-stealing thread pool.
The key insight is that, for Formulog, the order in which you explore logical inferences matters, because different orders lead to different distributions of the SMT workload across threads.
On SMT-heavy workloads, eager evaluation extensions to the Formulog interpreter and \souffle's code generator achieve mean 5.2$\times$ and 7.6$\times$ speedups, respectively, over off-the-shelf \souffle.
Our performance improvements to Formulog help make Formulog-based analyses competitive with previously published, non-Datalog analysis implementations, achieving faster times on 20 out of 23 benchmarks.
Our results provide strong evidence that Formulog can be a realistic platform for SMT-based static analysis, and also add some nuance to the conventional wisdom that fast Datalog evaluation depends on semi-naive evaluation.

\section*{Data-Availability Statement}

\ifanonymous
  See the supplementary material for raw results and data analysis scripts.
  All experimental material (including our extensions to Formulog and \souffle) will be made public upon paper publication.
\else
  An artifact supporting the results of this paper is available on Zenodo~\cite{artifact}.
  It includes our extensions to Formulog and \souffle, the benchmarks we use, the data from our experiments, benchmarking and data-processing scripts, and documentation on how to build on top of our software.
  Additionally, Formulog is available at \url{https://github.com/HarvardPL/formulog}.
\fi

\begin{acks}
  We thank Eric Zhang for his contributions to a preliminary implementation of the Formulog-to-\souffle compiler; we still use some of the code he wrote.
  We thank Eddie Kohler and the anonymous reviewers for helpful feedback on earlier versions of this paper.
  We also belatedly thank Yu Feng and Xinyu Wang for answering our questions about Scuba and giving us access to the reference Scuba implementation when we were writing the original Formulog paper (their assistance was mistakenly omitted from that paper's acknowledgments).
\end{acks}

\bibliographystyle{ACM-Reference-Format}
\bibliography{main}

\iffull
  \clearpage
  \appendix

\section{Translation Details}

Here we fill in some missing details in our compiler from Formulog to \souffle and \cc (Section~\ref{sec:compiler}).

\subsection{Helper Functions}\label{sec:helper_funcs}

We describe the helper functions used in the formal definition of our translation (Figure~\ref{fig:translation}):

\begin{itemize}
  \item $makeFlgRuntime(\many{T})$: instantiate the skeleton Formulog \cc runtime with a set of datatype definitions \many{T}.
        The skeleton runtime contains definitions for built-in Formulog functions (e.g., \lstinline;string_concat;), as well as the SMT-LIB interface.
        However, it needs to be specialized with additional data type information specific to the program being translated.
        For example, this function fills in a \cc \lstinline;switch; statement specifying the arity of each constructor symbol and generates \cc functions for destructing each constructor (Section~\ref{sec:compiling_dtors}); it also provides the serialized SMT-LIB definition of each datatype.
  \item $makeAdminRules(\many{F}~\many{\flg{H}})$: generate auxiliary \souffle relations and Horn clauses that induce the \souffle code generator to act in a way consistent with the semantics of the source Formulog program (given as function definitions \many{F} and rules \flg{\many{H}}).
        For example, because of the reification of relations (Section~\ref{sec:reify_relations}), the \souffle program needs to be stratified in a way consistent with the Formulog program; since \souffle does not see the use of predicates $p(\many{e})$ in functional code, we need to generate auxiliary rules that ensure a correct stratification.
        Our current translation of destructor atoms also requires an auxiliary relation (Section~\ref{sec:compiling_dtors}).
  \item $extractCode(\Gamma)$: Extract and compose together the \cc code in the given context $\Gamma$.
        \[
          \begin{array}{lll}
            extractCode(\cdot)                           & = & \bot                         \\
            extractCode(\flg{f} \mapsto \dl{f}, \Gamma') & = & extractCode(\Gamma')         \\
            extractCode(C, \Gamma')                      & = & C \circ extractCode(\Gamma')
          \end{array}
        \]
        The infix notation $\circ$ denotes code composition and $\bot$ denotes the empty \cc program.
  \item $vars(\flg{H})$: return the set of all variables that syntactically appear in Formulog rule $\flg{H}$.
  \item $freeVars(e)$: return the set of free variables in expression $e$.
\end{itemize}

\subsection{Compiling Destructors}\label{sec:compiling_dtors}

As mentioned in Section~\ref{sec:compiling_dtors_main}, a current limitation in \souffle---the inability to extract the arguments of complex terms returned by external functors---makes it artificially complicated to compile destructor atoms $e \rightarrow c(\many{X})$.
A straightforward translation would take advantage of \souffle's built-in support for complex terms.
In particular, in addition to the terms listed in Figure~\ref{fig:datalog}, \souffle has untagged record terms of the forms $\mathsf{nil}$ (the empty record) and $[t, \dots, t]$; that is, a more complete (but still partial) grammar of \souffle's terms is given by:
\[\begin{array}{@{}l@{}rcl@{}}
    \text{Terms} & ~t & ::= & X \BNFALT n \BNFALT @\cpp{f}(\many{t}) \BNFALT \mathsf{nil} \BNFALT [t, \dots, t]
  \end{array}\]
One should be able to translate a Formulog destructor $e \rightarrow c(X_0, \dots, X_{n-1})$ to a \souffle atom of the form $@dtor_c(t) = [X_0, \dots, X_{n-1}]$, where the term $t$ is the translation of the expression $e$ and the functor call $@dtor_c(t)$ returns the arguments of term $t$ if it has the outermost constructor $c$, and the empty record $\mathsf{nil}$ otherwise.
However, this code leads to an assertion failure within the \souffle code generator: evaluating the unification atom produced by the translation would require extracting the arguments of the complex term returned by the external functor call $@dtor_c(t)$.

To circumvent this limitation, our current approach generates $3 + |\many{X}|$ \souffle atoms for each Formulog destructor $e \rightarrow c(\many{X})$, as described in the \rn{Destruct} rule (reproduced from Figure~\ref{fig:translation}):

\infrule[Destruct]{
\Gamma \vdash \transl{e} \Rightarrow t, \Gamma'
\andalso
\{X,Y,Z\} \cap \Delta = \emptyset
\\
\dl{\many{A}}_{dtor} = [
X = t,
@\mathsf{is\_ctor}_c(X) = Y,
\mathsf{move\_barrier}(Y, Z)
]
\\
\dl{\many{A}}_{assign} = [
X_0 = @\mathsf{nth}(0, X, Z),
\dots,
X_{n-1} = @\mathsf{nth}(n-1, X, Z)
]
}{
\Gamma, \Delta \vdash \transl{e \rightarrow c(X_0, \dots, X_{n-1})} \Rightarrow \dl{\many{A}}_{dtor} \concat \dl{\many{A}}_{assign}, \Gamma', \Delta \cup \{X, Y, Z\}
}

\noindent
The atoms output by the translation perform the following four-step procedure:
\begin{enumerate}
  \item Bind a fresh variable $X$ to the term that the expression $e$ compiles to.
  \item Assign the return value of the external functor call $@\mathsf{is\_ctor}_c(X)$ to a fresh variable $Y$.
        This functor returns \lstinline;1; if the outermost constructor of its argument is $c$ and \lstinline;0; otherwise (the runtime has such a \cc function for each constructor $c$).
  \item Query into an administrative relation using the predicate $\mathsf{move\_barrier}(Y, Z)$ where $Z$ is fresh; this relation contains the single tuple \lstinline;(1, 1);, and so \souffle evaluation proceeds past this point only if $Y$ is \lstinline;1;---i.e., the outermost constructor is indeed $c$.
  \item Create $n$ atoms of the form $X_i = @\mathsf{nth}(i, X, Z)$.
        The $\mathsf{nth}$ functor returns the $i$th argument of $X$, which it treats as a constructed value of arity at least $i + 1$.
        While the $\mathsf{nth}$ functor ignores its third parameter, the argument $Z$ helps induce the right behavior from \souffle's code generator, which otherwise sometimes places the $\mathsf{nth}$ functor calls before evaluation has established that the constructor matches (leading to \cc memory access errors): now, \souffle cannot try to make the $\mathsf{nth}$ functor calls before $Z$ is bound to a value, which happens only by evaluating the predicate $\mathsf{move\_barrier}(Y, Z)$.
\end{enumerate}

\noindent
This slightly clunky translation of destructor atoms will be unnecessary once \souffle improves its support for complex terms; we do not believe it leads to a substantial runtime cost in the meantime.

\section{Proofs for the Correctness of Eager Evaluation}\label{sec:correctness_proofs}

We prove that eager evaluation (Algorithm~\ref{algo:eager}) is correct: it derives exactly the facts that semi-naive evaluation derives.
To do so, we first prove the completeness (Section~\ref{sec:completeness}) and soundness of (Section~\ref{sec:soundness}) of eager evaluation with respect to semi-naive evaluation on a {\em single} stratum.
We then use these lemmas to prove the correctness of eager evaluation on a sequence of strata, i.e., a complete Datalog program (Section~\ref{sec:full_correctness}).

\subsection{Completeness}\label{sec:completeness}

Eager evaluation is a concurrent algorithm, and its completeness (with respect to semi-naive evaluation) relies on the visibility guarantees made by the concurrent data structures and worklists used in it.
We axiomatize these guarantees (Section~\ref{sec:concurrency}) and then prove the stratum-level completeness of eager evaluation (Section~\ref{sec:completeness_proof}).

\subsubsection{Preliminaries}\label{sec:concurrency}

Here, we lay out assumptions, notation, and a few lemmas.
All line numbers refer to the pseudocode for Algorithm~\ref{algo:eager}.

Consider some rule $R$ that has the form $p_0(\many{t}_0) \horn p_1(\many{t}_1), \dots, p_n(\many{t}_n)$ for $n > 0$.
Say that facts $a_i$ for $i \in [n]$ have been derived (we use $[n]$ to denote the set $\{1, \dots, n\}$) and can be used to ground the rule $R$: that is, there exists a substitution $\sigma$ such that $p_0(\sigma(\many{t}_0)) \horn a_1, \dots, a_n.$ is a ground version of the rule $R$.
It follows that each fact $a_i$ has the form $p_i(\many{n}_i)$.
Let $a_0$ denote the fact $p_0(\sigma(\many{t}_0))$.

When each fact $a_i$ is derived, it is written to data structure $DB_{p_i}$ (line 12), an event we denote $w_i$.
This write precedes the submission of a work item representing the specialization of rule $R$ to fact $a_i$; we refer to this work item as work item $i$.
Let $x_i$ be the event when work item $i$ is dequeued to be e\underbar{x}ecuted (line 4 or 7).
During the execution of work item $i$, reads occur of the databases storing the current contents of relations.
Let the event $r$ be one of these reads; we say $\mathsf{item}(r) = i$ to indicate that read $r$ happens during the execution of work item $i$.
If event $r$ is a read of database $DB_{p_j}$, we say that $\mathsf{tgt}(r) = DB_{p_j}$ (e.g., if $r$ is a read that occurs on line 17, $\mathsf{tgt}(r) = DB_{p_1}$).
We use the notation $\mathsf{tuples}(r)$ to denote the set of tuples returned by the read $r$.

Our argument for the completeness of semi-naive evaluation relies on the construction of a happens-before relation \happensbeforesym between events.
In addition to being a partial order, we require four axioms about the behavior of this relation.

\begin{axiom}[Visible-Writes]\label{ax:visibility}
  $\forall i \in [n], r.~\happensbefore{w_i}{r} \wedge \mathsf{tgt}(r) = DB_{p_i} \implies \mathbf{n}_i \in \mathsf{tuples}(r).$
\end{axiom}

\begin{axiom}[Writes-Precede-Dequeues]\label{ax:item_writes}
  $\forall i \in [n].~\happensbefore{w_i}{x_i}.$
\end{axiom}

\begin{axiom}[Dequeues-Precede-Reads]\label{ax:item_reads}
  $\forall i \in [n], r.~\mathsf{item}(r) = i \implies \happensbefore{x_i}{r}.$
\end{axiom}

\begin{axiom}[Linearizable-Relations]\label{ax:rw_ordered}
  $\forall i \in [n], r.~\mathsf{tgt}(r) = DB_{p_i} \implies \happensbefore{r}{w_i} \vee \happensbefore{w_i}{r}.$
\end{axiom}

\noindent
Axiom~\ref{ax:visibility} states that if a read and a write access the same data structure and the write happens-before the read, then the written tuple is in the set of tuples returned by the read; this axiom defines what it means for a write to happen-before a read.
Axiom~\ref{ax:item_writes} states that write $w_i$ that initiates the submission of work item $i$ happens-before work item $i$ is dequeued to be executed, a visibility guarantee we expect the work-stealing thread pool to make (given that some form of synchronization is necessary when the thread pool enqueues and dequeues work items).
Axiom~\ref{ax:item_reads} states that any reads that happen during the execution of a work item happen-after the work item is dequeued; this is the normal requirement that a happens-before relation respects program order.
Axiom~\ref{ax:rw_ordered} states that if a read and write both access the same data structure, then one happens-before the other; this is behavior that we would expect from a linearizable concurrent data structure~\citep{Herlihy1990Linearizability}, a standard notion of correctness for concurrent data structures.

In addition to these axioms, we define the proposition $\mathsf{succeeds}(i)$:
$$\mathsf{succeeds}(i) \triangleq \forall j \in [n], r.~\mathsf{item}(r) = i \wedge \mathsf{tgt}(r) = DB_{p_j} \implies \mathbf{n}_j \in \mathsf{tuples}(r).$$
If the proposition $\mathsf{succeeds}(i)$ holds, then work item $i$ will derive the fact $a_0$---i.e., the ground head $p_0(\sigma(\many{t}_0))$---since all the tuples necessary to derive the fact $a_0$ are visible at the relevant reads.
We use the notation $W_i$ to denote the set of writes that happen-before work item $i$ is dequeued to be executed: $W_i \triangleq \{ w_j \mid \happensbefore{w_j}{x_i} \}$.
We denote the set of all writes $\{w_i \mid i \in [n]\}$ with $\mathcal{W}$.

We first prove that all writes that happen-before a work item is executed are visible to the corresponding reads that happen in that work item.

\begin{lemma}\label{lemma:visibility}
  $\forall i, j \in [n], r.~w_j \in W_i \wedge \mathsf{item}(r) = i \wedge \mathsf{tgt}(r) = DB_{p_j} \implies \mathbf{n}_j \in \mathsf{tuples}(r).$

  \begin{proof}
    We have $\happensbefore{w_j}{x_i}$ and $\happensbefore{x_i}{r}$ (Axiom~\ref{ax:item_reads}); by transitivity of $\happensbeforesym$, we have $\happensbefore{w_j}{r}$.
    By Axiom~\ref{ax:visibility}, we conclude $\mathbf{n}_j \in \mathsf{tuples}(r)$.
  \end{proof}
\end{lemma}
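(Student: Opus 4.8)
The plan is to prove Lemma~\ref{lemma:visibility} by a short happens-before chase: unfold the definition of $W_i$, extend the resulting ordering through the dequeue event of work item $i$ to the read $r$ by transitivity, and then invoke the visibility axiom. Concretely, I would fix arbitrary $i, j \in [n]$ and a read $r$ satisfying the three hypotheses $w_j \in W_i$, $\mathsf{item}(r) = i$, and $\mathsf{tgt}(r) = DB_{p_j}$. Unfolding the definition $W_i \triangleq \{ w_k \mid \happensbefore{w_k}{x_i} \}$, the hypothesis $w_j \in W_i$ is literally the ordering $\happensbefore{w_j}{x_i}$: the write of $a_j$ precedes the moment work item $i$ is dequeued for execution.

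Next I would supply the link from the dequeue to the read. Since $\mathsf{item}(r) = i$, Axiom~\ref{ax:item_reads} (Dequeues-Precede-Reads) gives $\happensbefore{x_i}{r}$. Because the happens-before relation is a partial order, and in particular transitive, composing this with $\happensbefore{w_j}{x_i}$ yields $\happensbefore{w_j}{r}$. Finally, I would apply Axiom~\ref{ax:visibility} (Visible-Writes) to the read $r$: its two hypotheses, $\happensbefore{w_j}{r}$ and $\mathsf{tgt}(r) = DB_{p_j}$, are now both in hand, so it delivers exactly the desired conclusion $\mathbf{n}_j \in \mathsf{tuples}(r)$.

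I do not expect any genuine obstacle in this lemma; it is a packaging step whose only content is the transitive composition of $\happensbefore{w_j}{x_i}$ and $\happensbefore{x_i}{r}$ into $\happensbefore{w_j}{r}$, and it relies on just two of the four axioms (Dequeues-Precede-Reads and Visible-Writes) together with transitivity of the happens-before order. The real difficulty lies downstream: Lemma~\ref{lemma:visibility} is meant to be the engine of the genuine completeness argument, where one must exhibit a particular work item $i$ for which all of $\mathcal{W}$ sits inside $W_i$ --- i.e., that every relevant write is visible to one common work item --- which is where Axioms~\ref{ax:item_writes} and~\ref{ax:rw_ordered}, and a case analysis on the ordering of the writes (as in the informal two-atom example), will be needed.
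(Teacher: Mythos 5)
Your proposal is correct and matches the paper's proof exactly: unfold $w_j \in W_i$ to get $\happensbefore{w_j}{x_i}$, obtain $\happensbefore{x_i}{r}$ from Axiom~\ref{ax:item_reads}, compose by transitivity to $\happensbefore{w_j}{r}$, and conclude via Axiom~\ref{ax:visibility}. Your closing remark about where the real work lies (Lemma~\ref{lemma:key_completeness}, using Axioms~\ref{ax:item_writes} and~\ref{ax:rw_ordered}) is also an accurate reading of the paper's structure.
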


We then prove the key lemma in the argument for the completeness of eager evaluation.
Intuitively, the proof shows that if work item $i$ fails because some write $w_j$ is not visible during the execution of that work item, then it must be that when work item $j$ runs, all the writes visible to work item $i$ are visible, plus the write $w_j$.
If work item $j$ then fails because some write $w_k$ is not visible, then it must be that when work item $k$ runs, all the writes visible to work item $j$ (and thus work item $i$) are visible, plus the write $w_k$.
Following this line of argument, there must be some work item where all writes $w_1, \dots, w_n$ are visible, and this work item will derive fact $a_0$.

\begin{lemma}\label{lemma:key_completeness}
  $\forall m.~\big((\exists i \in [n].~|\mathcal{W} - W_i| \le m) \implies \exists j \in [n].~\mathsf{succeeds}(j)\big).$

  \begin{proof}
    We proceed by induction on $m$ (morally, this is strong induction on the number of writes that are not visible to a given work item).

    \begin{proofcases}
      \item[$(m = 0)$]
      We have that there exists an $i$ such that $|\mathcal{W} - W_i| \le 0$---i.e, $W_i = \mathcal{W}$.
      We then have $\mathsf{succeeds}(i)$ by Lemma~\ref{lemma:visibility}.

      \item[$(m = m' + 1)$]
      Our inductive hypothesis is $$(\exists l \in [n].~|\mathcal{W} - W_l| \le m') \implies \exists j \in [n].~\mathsf{succeeds}(j).$$
      We also have the premise that there exists an $i$ such that $|\mathcal{W} - W_i| \le m' + 1$.
      We need to show that there exists a $j$ such that $\mathsf{succeeds}(j)$.

      Consider the proposition $\mathsf{succeeds}(i)$.
      If it is true, then choose $j = i$.
      If it is not true, then there must exist some $k \in [n], r$ such that: $$\mathsf{item}(r) = i \wedge \mathsf{tgt}(r) = DB_{p_k} \wedge \mathbf{n}_k \not \in \mathsf{tuples}(r).$$
      It must be that $\neg (\happensbefore{w_k}{r})$ (Axiom~\ref{ax:visibility}).
      Consequently, we have:
      \[
        \begin{array}{lr}
          \happensbefore{x_i}{r}   & \text{(Axiom~\ref{ax:item_reads})}  \\
          \happensbefore{r}{w_k}   & \text{(Axiom~\ref{ax:rw_ordered})}  \\
          \happensbefore{w_k}{x_k} & \text{(Axiom~\ref{ax:item_writes})}
        \end{array}
      \]
      By transitivity, $\happensbefore{x_i}{x_k}$ and $W_i \subseteq W_k$.
      Additionally, it must be that $w_k \not \in W_i$ (Lemma~\ref{lemma:visibility}); since $\happensbefore{w_k}{x_k}$ (Axiom~\ref{ax:item_writes}), we have $w_k \in W_k$ and thus $W_i \subset W_k$.

      By assumption, we have $|\mathcal{W} - W_i| \le m' + 1$; it must be that $|\mathcal{W} - W_k| \le m'$, in which case the inductive hypothesis completes the proof. \qedhere
    \end{proofcases}
  \end{proof}
\end{lemma}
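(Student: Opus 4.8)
The statement is a purely order-theoretic consequence of the four happens-before axioms together with Lemma~\ref{lemma:visibility}, so the plan is to reason entirely about the abstract relation \happensbeforesym rather than about the scheduler itself. The intuition I would chase is exactly that --- a chase argument: if a given work item fails to derive $a_0$, its failure must be witnessed by a read that is ordered \emph{before} some write $w_k$, and that write in turn belongs to a work item $k$ that is dequeued strictly later and therefore sees strictly more of the writes in $\mathcal{W}$. Iterating, the set of visible writes strictly grows; since $\mathcal{W}$ is finite, the process must terminate at a work item that sees every write and hence succeeds. The quantity $|\mathcal{W} - W_i|$ (the number of writes not yet visible to work item $i$) is precisely the right induction measure to formalize this, which is why I would run strong induction on $m$.

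For the base case $m = 0$ I would take the hypothesized $i$ with $|\mathcal{W} - W_i| \le 0$, so that $W_i = \mathcal{W}$: every $w_j$ happens-before $x_i$, and Lemma~\ref{lemma:visibility} immediately gives $\mathbf{n}_j \in \mathsf{tuples}(r)$ for every read $r$ in work item $i$ targeting $DB_{p_j}$, i.e. $\mathsf{succeeds}(i)$, so I pick $j = i$. For the inductive step, given $i$ with $|\mathcal{W} - W_i| \le m'+1$, I would case-split on $\mathsf{succeeds}(i)$. If it holds we are done; otherwise some read $r$ in work item $i$ on $DB_{p_k}$ misses $\mathbf{n}_k$, and the goal becomes exhibiting a work item whose count of invisible writes has dropped to at most $m'$, so that the inductive hypothesis applies.

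The crux of the step --- and the part I expect to be the main obstacle --- is establishing the \emph{strict} containment $W_i \subset W_k$. I would assemble it from the axioms in a fixed order: from $\mathbf{n}_k \notin \mathsf{tuples}(r)$ and Axiom~\ref{ax:visibility} I obtain $\neg(\happensbefore{w_k}{r})$; Axiom~\ref{ax:rw_ordered} then forces $\happensbefore{r}{w_k}$; chaining with $\happensbefore{x_i}{r}$ (Axiom~\ref{ax:item_reads}) and $\happensbefore{w_k}{x_k}$ (Axiom~\ref{ax:item_writes}) yields $\happensbefore{x_i}{x_k}$, whence $W_i \subseteq W_k$ by transitivity. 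Strictness is the delicate point: I would use Lemma~\ref{lemma:visibility} in contrapositive form to show $w_k \notin W_i$ (otherwise $\mathbf{n}_k$ would have been visible at $r$), while $w_k \in W_k$ holds because $\happensbefore{w_k}{x_k}$. Hence $|\mathcal{W} - W_k| \le m'$ and the inductive hypothesis supplies the desired successful work item.

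The one subtlety I would flag is that this argument never pins down \emph{which} work item succeeds, only that one does; the existential in the conclusion is essential, since a data race can cause any particular work item to fail. The entire difficulty is therefore concentrated in keeping the happens-before chain and the contrapositive visibility argument tight enough to guarantee that the measure strictly decreases at each failure.
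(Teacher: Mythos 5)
Your proposal is correct and matches the paper's proof essentially step for step: the same induction on $m$ with $|\mathcal{W} - W_i|$ as the measure, the same chain $\happensbefore{x_i}{r}$, $\happensbefore{r}{w_k}$, $\happensbefore{w_k}{x_k}$ assembled from Axioms~\ref{ax:item_reads}, \ref{ax:rw_ordered}, and~\ref{ax:item_writes} after using Axiom~\ref{ax:visibility} to get $\neg(\happensbefore{w_k}{r})$, and the same strictness argument $W_i \subset W_k$ via the contrapositive of Lemma~\ref{lemma:visibility} together with $w_k \in W_k$. Your flagged subtlety---that the argument only establishes the existential, never pinning down which work item succeeds---is likewise exactly the character of the paper's proof.
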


Since the premise in the implication in Lemma~\ref{lemma:key_completeness} can be trivially satisfied, it follows that if atoms $a_1, \dots, a_n$ have been derived, fact $a_0$ will also be derived via a specialization of rule $R$.

\begin{lemma}\label{lemma:completeness_simple}
  $\exists j \in [n].~\mathsf{succeeds}(j).$

  \begin{proof}
    Apply Lemma~\ref{lemma:key_completeness}, choosing $m = |\mathcal{W}|$ and $i = 1$.
  \end{proof}
\end{lemma}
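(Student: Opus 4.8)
The plan is to derive Lemma~\ref{lemma:completeness_simple} as an immediate corollary of Lemma~\ref{lemma:key_completeness} by discharging the latter's hypothesis. Recall that Lemma~\ref{lemma:key_completeness} is a conditional statement quantified over every $m$: if some work item has at most $m$ writes invisible to it (that is, $\exists i \in [n].~|\mathcal{W} - W_i| \le m$), then some work item $j$ satisfies $\mathsf{succeeds}(j)$. Since the statement to prove is exactly the unconditional conclusion $\exists j \in [n].~\mathsf{succeeds}(j)$, it suffices to pick one value of $m$ for which the hypothesis holds, instantiate Lemma~\ref{lemma:key_completeness} at that $m$, and read off the conclusion.

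The value I would choose is $m = |\mathcal{W}|$, the total number of writes $w_1, \dots, w_n$ arising from the derivations of the facts $a_1, \dots, a_n$ that ground $R$. To supply the existential witness for the hypothesis, first I would note that $[n]$ is nonempty, since $R$ has the form $p_0(\many{t}_0) \horn p_1(\many{t}_1), \dots, p_n(\many{t}_n)$ with $n > 0$; hence $i = 1$ is a legitimate choice. It then remains only to verify $|\mathcal{W} - W_1| \le m$, which is immediate: by the definitions $W_1 = \{w_j \mid \happensbefore{w_j}{x_1}\}$ and $\mathcal{W} = \{w_i \mid i \in [n]\}$ we have $\mathcal{W} - W_1 \subseteq \mathcal{W}$, so $|\mathcal{W} - W_1| \le |\mathcal{W}| = m$. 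Applying Lemma~\ref{lemma:key_completeness} at $m = |\mathcal{W}|$ with witness $i = 1$ then yields $\exists j \in [n].~\mathsf{succeeds}(j)$, as required.

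I do not expect any genuine obstacle, as all the real work has already been done inside the inductive proof of Lemma~\ref{lemma:key_completeness}. Conceptually, the only thing being exploited is that the number of writes invisible to any work item can never exceed the total number of writes $|\mathcal{W}|$; taking $m$ equal to this upper bound makes the cardinality hypothesis trivially satisfiable and strips away the conditional. The sole facts to confirm are the subset containment $\mathcal{W} - W_1 \subseteq \mathcal{W}$ and the nonemptiness of $[n]$, both of which follow directly from the setup in Section~\ref{sec:concurrency}.
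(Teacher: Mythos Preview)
Your proposal is correct and matches the paper's proof exactly: instantiate Lemma~\ref{lemma:key_completeness} with $m = |\mathcal{W}|$ and witness $i = 1$, observing that $|\mathcal{W} - W_1| \le |\mathcal{W}|$ is immediate and that $n > 0$ guarantees $1 \in [n]$. You have simply spelled out the details the paper leaves implicit.
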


\subsubsection{Proof of Completeness}\label{sec:completeness_proof}

\begin{lemma}[Completeness]\label{thm:semi_to_eager}
  Fixing the facts derived in previous strata, if semi-naive evaluation derives a fact $a_0$ in the current stratum, eager evaluation also derives fact $a_0$ in the current stratum.

  \begin{proof}
    Say that semi-naive evaluation uses rule $R$ to derive some fact $a_0$ with derivation tree $T$.
    We proceed by strong induction on the height of derivation trees, using the convention that the height of the derivation tree of a fact produced in a previous stratum is 0.

    In the base case, $height(T) = 1$.
    This means that the body of rule $R$ contains no predicates recursive in the current stratum---there can be no data race.
    Since relations from previous strata are the same for both eager and semi-naive evaluation (by assumption), eager evaluation will derive fact $a_0$ using rule $R$ (during one of the initial work items).

    In the inductive case, $height(T) = k + 1$ for some $k$.
    Let facts $a_1, \dots, a_n$ be the direct children of $a_0$ in the tree $T$---i.e., they are the facts used by rule $R$ to derive $a_0$; necessarily, $height(tree(a_i)) \le k$ for $i \in [1, n]$.
    By the inductive hypothesis, eager evaluation derives each of the facts $a_1, \dots, a_n$; by Lemma~\ref{lemma:completeness_simple}, a work item specializing rule $R$ to some fact $a_i$ will derive fact $a_0$.
  \end{proof}
\end{lemma}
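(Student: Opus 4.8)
The plan is to establish this stratum-level completeness claim by strong induction on the height of the semi-naive derivation tree witnessing $a_0$, using the convention (as in the surrounding development) that any fact established in a previous stratum has derivation height $0$. The induction is natural here because semi-naive evaluation produces $a_0$ by firing some rule $R$ of the form $p_0(\many{t}_0) \horn p_1(\many{t}_1), \dots, p_n(\many{t}_n)$ on facts $a_1, \dots, a_n$ that are exactly the direct children of $a_0$ in its derivation tree, and each child has strictly smaller height. At each step the obligation is to exhibit a work item of eager evaluation---namely a specialization of $R$---that is enqueued and processed so as to yield $a_0$.

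In the base case the derivation tree has height $1$, so every premise of $R$ is a previous-stratum fact; consequently $R$ mentions no predicate that is recursive in the current stratum, and $R$ corresponds to one of eager evaluation's initial (non-recursive) work items. Since we have fixed the previous-stratum relations to agree between the two algorithms, this initial work item reads the very same contents that semi-naive evaluation reads, and hence derives $a_0$. No data race can interfere, because no relation occurring in the body of $R$ is written within the current stratum. For the inductive step, with a tree of height $k+1$, I would first apply the induction hypothesis to each direct child $a_1, \dots, a_n$---each of height at most $k$---to conclude that eager evaluation does eventually derive all of $a_1, \dots, a_n$.

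The crux---and the step I expect to be the main obstacle---is passing from ``all of $a_1, \dots, a_n$ are eventually derived'' to ``some work item actually fires $R$ and yields $a_0$,'' in the face of data races and the near-total absence of explicit synchronization. The worry is that each write of an $a_i$ could race against the read performed by the work item specialized to it, so that every such work item reads a stale relation and none simultaneously observes all of $a_1, \dots, a_n$. I would discharge this worry by invoking Lemma~\ref{lemma:completeness_simple}, which asserts $\exists j \in [n].~\mathsf{succeeds}(j)$: at least one work item, specialized to some $a_j$, sees every tuple $\many{n}_1, \dots, \many{n}_n$ needed to ground $R$ and therefore derives $a_0$. This completes the induction.

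Were Lemma~\ref{lemma:completeness_simple} not already in hand, this concurrency step is where essentially all the difficulty lives, and I would reconstruct it from the happens-before axioms (Axioms~\ref{ax:visibility}--\ref{ax:rw_ordered}) via the argument of Lemma~\ref{lemma:key_completeness}: a secondary induction on the number of writes invisible to a candidate work item. The engine of that argument is that if work item $i$ fails because some write $w_k$ is invisible to it, then linearizability and program order force $W_i \subset W_k$, so strictly more writes are visible to work item $k$; iterating this strict increase, some work item must eventually see all of $w_1, \dots, w_n$ and succeed. This monotone-visibility reasoning, resting on the linearizability of the concurrent relations and worklists, is the genuinely subtle part; the derivation-height induction wrapped around it is routine.
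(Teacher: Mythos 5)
Your proposal is correct and follows essentially the same route as the paper's proof: a strong induction on derivation-tree height (with previous-stratum facts at height $0$), a race-free base case handled by the initial non-recursive work items, and an inductive step that invokes Lemma~\ref{lemma:completeness_simple} to guarantee that some work item specialized to one of $a_1, \dots, a_n$ observes all the needed tuples and derives $a_0$. Your closing sketch of how to reconstruct Lemma~\ref{lemma:completeness_simple} via the monotone-visibility induction of Lemma~\ref{lemma:key_completeness} also matches the paper's argument exactly.
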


\subsection{Soundness}\label{sec:soundness}

\begin{lemma}[Soundness]\label{thm:eager_to_semi}
  Fixing the facts derived in previous strata, if eager evaluation derives a fact $a_0$ in the current stratum, semi-naive evaluation also derives fact $a_0$ in the current stratum.
\end{lemma}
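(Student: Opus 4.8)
The plan is to show that every fact eager evaluation derives in the current stratum admits a finite derivation tree, and then appeal to the standard correctness of semi-naive evaluation, which (with the facts of the previous strata fixed) derives exactly the facts possessing such a tree. First I would unpack what it means for eager evaluation to derive $a_0$: a fact enters a database only at line 12, when $DB_{p_0}.addIfAbsent(\many{n})$ succeeds on a tuple yielded by {\sc evalRule}. Inspecting {\sc evalRule} (lines 15--24), that yield is produced from a substitution $\sigma$ assembled by successful queries $\many{n}_i \in DB_{p_i}.query(\sigma_i(\many{t}_i))$, so $a_0 = p_0(\sigma(\many{t}_0))$ and each body fact $a_i = p_i(\many{n}_i)$ was present in $DB_{p_i}$ at the time of its read (lines 17, 19, 23). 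Since rule specialization merely applies a partial substitution and reorders body atoms---operations that preserve the set of ground instances---the clause $p_0(\sigma(\many{t}_0)) \horn a_1, \ldots, a_n$ is a genuine ground instance of the original rule $R$, so $a_0$ follows from $a_1, \ldots, a_n$ by a single application of $R$.

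Next I would build a derivation tree for each eagerly-derived fact by well-founded recursion. The key point is that every body fact $a_i$ is inserted strictly before $a_0$: within the single {\sc evalRule} invocation yielding $a_0$, program order places the reads observing $a_1, \ldots, a_n$ before the yield, which itself precedes the successful insertion at line 12; and a read can only return tuples that were actually written beforehand (the basic ``no fabricated reads'' guarantee of the concurrent databases, which is weaker than Axiom~\ref{ax:visibility}). Ordering facts by the linearization point of their successful insertion therefore gives a strict order in which the children of a derivation precede its root, so the recursion is well-founded and bottoms out at facts from previous strata, which by assumption are shared with semi-naive evaluation. This yields a finite tree $T_{a_0}$ rooted at $a_0$ whose immediate subtrees are the trees already assigned to $a_1, \ldots, a_n$.

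Finally I would conclude by invoking the correctness of semi-naive evaluation: on a single stratum, with the previous strata fixed, it computes the least fixpoint of the immediate-consequence operator---that is, exactly those facts admitting a finite derivation tree built from the rules of the program. Since $a_0$ admits the tree $T_{a_0}$, semi-naive evaluation derives $a_0$, as required. I expect the only subtlety to be the well-foundedness step of the second paragraph---ruling out a spurious cyclic ``derivation'' manufactured by concurrency---but this rests only on the mild observation that reads never return unwritten tuples, making soundness considerably easier to establish than the full happens-before reasoning needed for completeness.
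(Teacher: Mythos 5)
Your proof is correct, but it takes a genuinely different route from the paper's, in two substantive ways. The paper's proof (Appendix~\ref{sec:soundness}) opens with ``say that eager evaluation uses rule $R$ to derive some fact $a_0$ with derivation tree $T$'' and proceeds by strong induction on the height of $T$---that is, it simply takes for granted that every eagerly derived fact comes equipped with a well-founded derivation tree, which under concurrency is precisely where a spurious circular justification could hide. Your second paragraph supplies exactly this elided step: ordering facts by the linearization point of their unique successful insertion, and observing that reads never return unwritten tuples, yields a strict order under which the children of a derivation precede its root, so the recursion is well-founded and bottoms out at previous-strata facts; your remark that this needs only a ``no fabricated reads'' guarantee weaker than Axiom~\ref{ax:visibility} is accurate, and it makes your argument more careful on the concurrency side than the paper's. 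Conversely, on the semi-naive side you are modular where the paper is explicit: you invoke the standard least-fixpoint characterization of semi-naive evaluation (derived facts $=$ facts admitting finite derivation trees) as a black box, while the paper re-derives the needed direction by hand, inducting on tree height and counting iterations (it takes $h$ to be the maximum height of the children's shortest trees and argues that iteration $h+1$ must exist and derives $a_0$ there). Both are sound, and each buys something: yours isolates the genuinely concurrent content of soundness and delegates the rest to textbook Datalog theory, while the paper's is self-contained and structurally parallel to its completeness proof. One small wording nitpick: specialization does not ``preserve the set of ground instances''---it restricts it---but the containment you actually need (every ground instance of the specialized rule is a genuine ground instance of $R$) is the one you state and use, so nothing breaks.
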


\begin{proof}
  Say that eager evaluation uses rule $R$ to derive some fact $a_0$ with derivation tree $T$.
  We proceed by strong induction on the height of derivation trees, using the convention that the height of the derivation tree of a fact produced in a previous stratum is 0.

  In the base case, $height(T) = 1$; the body of rule $R$ contains no predicates recursive in the current stratum.
  Since relations from previous strata are the same for both eager and semi-naive evaluation (by assumption), semi-naive evaluation will derive fact $a_0$ when it runs rule $R$.

  In the inductive case, $height(T) = k + 1$ for some $k$.
  Let facts $a_1, \dots, a_n$ be the direct children of $a_0$ in the tree $T$---i.e., they are the facts used by rule $R$ to derive $a_0$; necessarily, $height(tree(a_i)) \le k$ for $i \in [1, n]$.
  By the inductive hypothesis, semi-naive evaluation derives each of the facts $a_1, \dots, a_n$.
  Let $h = max\{ height(T_i) : T_i\text{ is the shortest derivation tree of }a_i\}$; it must be $h \le k$.
  One of the facts $a_1, \dots, a_n$ was derived for the first time in semi-naive iteration $h$, which means that there must be an iteration $h + 1$ because iteration stops only when no new facts are derived.
  Furthermore, all the other facts $a_1, \dots, a_n$ must have been derived going into iteration $h + 1$ (because $h$ is the maximum).
  Therefore, semi-naive evaluation will derive fact $a_0$ in iteration $h + 1$ (using a version of rule $R$).
\end{proof}

\subsection{Correctness}\label{sec:full_correctness}

\begin{corollary}\label{thm:stratum_eager_iff_semi}
  Fixing the facts derived in previous strata, eager evaluation derives exactly the facts semi-naive evaluation derives for the current stratum.
\end{corollary}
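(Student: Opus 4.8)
The plan is to recognize that this corollary is nothing more than the conjunction of the two lemmas just proved, recast as a set equality. Write $E$ for the set of facts eager evaluation derives in the current stratum and $S$ for the set semi-naive evaluation derives, both taken with the facts from previous strata held fixed. The phrase ``derives exactly the facts'' is precisely the statement $E = S$, and I would establish it by antisymmetry of set inclusion, i.e.\ by proving $E \subseteq S$ and $S \subseteq E$ separately.

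First I would prove $E \subseteq S$ by appeal to Soundness (Lemma~\ref{thm:eager_to_semi}): taking an arbitrary $a_0 \in E$, the lemma states that if eager evaluation derives $a_0$ then semi-naive evaluation does too, giving $a_0 \in S$. Next I would prove the reverse inclusion $S \subseteq E$ by appeal to Completeness (Lemma~\ref{thm:semi_to_eager}): for arbitrary $a_0 \in S$, that lemma yields $a_0 \in E$. Since both lemmas are stated under the identical hypothesis of fixing the previously derived strata, the two inclusions hold in the same context and combine to give $E = S$.

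I do not expect a genuine obstacle at this step: all the real work has been discharged into the two lemmas, with the bulk of the difficulty concentrated in Completeness and its delicate happens-before argument for the concurrent algorithm. The only point I would verify carefully is that the soundness and completeness lemmas quantify over the \emph{same} fixed context of previously derived facts, so that the two directions may be composed without any mismatch in their standing assumptions; given that both are phrased as ``fixing the facts derived in previous strata,'' this is immediate.
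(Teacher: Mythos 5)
Your proposal is correct and matches the paper's proof, which likewise derives the corollary directly from the Soundness and Completeness lemmas (the paper's one-line proof cites exactly these two, and your set-inclusion phrasing is just a spelled-out version of that). Your extra check that both lemmas share the same fixed context of previously derived facts is a sound observation but introduces no new content beyond the paper's argument.
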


\begin{proof}
  This follows by Lemmas~\ref{thm:semi_to_eager} and~\ref{thm:eager_to_semi}.
\end{proof}

\begin{theorem}[Correctness]\label{thm:correctness}
  Given an arbitrary Datalog program stratified as $(P_1, \dots, P_n)$, eager evaluation and semi-naive evaluation derive exactly the same facts.
\end{theorem}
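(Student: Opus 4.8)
The plan is to lift the stratum-level equivalence of Corollary~\ref{thm:stratum_eager_iff_semi} to the whole program by induction on the strata. Both eager and semi-naive evaluation process the strata $P_1, \dots, P_n$ in the same dependency order (eager evaluation is applied one stratum at a time, as described in Section~\ref{sec:eager_description}), so the stratification provides a shared scaffold: within any single stratum, the only thing that can differ between the two algorithms is the set of facts they inherit from the strata already solved. The Corollary handles each stratum in isolation, \emph{given} that the two algorithms agree on those inherited facts, so the whole-program argument reduces to propagating that agreement upward through the strata.

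First I would fix the stratification $(P_1, \dots, P_n)$ and induct on the stratum index $k \in [1, n]$, with inductive claim that eager and semi-naive evaluation derive exactly the same facts for strata $P_1, \dots, P_k$. In the base case $k = 1$ there are no previous strata, so both evaluations begin from the same input facts; the hypothesis ``fixing the facts derived in previous strata'' of Corollary~\ref{thm:stratum_eager_iff_semi} is then vacuously satisfied, and the Corollary immediately yields agreement on $P_1$. In the inductive step, the inductive hypothesis states that the two algorithms derive the same facts across $P_1, \dots, P_{k-1}$; these are precisely the facts on which stratum $P_k$ may depend, since stratified negation (and the stratification of predicates-as-functions from Section~\ref{sec:formulog}) guarantees that $P_k$'s rules consult only strictly lower strata for negated and reified predicates. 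Because both algorithms therefore enter $P_k$ with identical lower-stratum databases, Corollary~\ref{thm:stratum_eager_iff_semi} applies and gives agreement on $P_k$, completing the induction. Taking $k = n$ yields that the two algorithms derive the same facts over all strata, hence the same facts for the entire program, which is the theorem.

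I expect the genuine difficulty to have already been discharged in the single-stratum lemmas---completeness (Lemma~\ref{thm:semi_to_eager}) in particular, which must reason about data races in the concurrent worklist under the happens-before axioms. At the whole-program level the only point requiring care is that the inductive hypothesis supplies \emph{exactly} the precondition of the Corollary, namely that both evaluations share the same previous-stratum facts. This in turn hinges on both algorithms respecting the same stratification and completing each stratum fully before advancing to the next, so that no cross-stratum interleaving can invalidate the ``fixing the facts derived in previous strata'' assumption. Once that alignment is observed, the induction is routine and the theorem follows directly from the stratum-level result.
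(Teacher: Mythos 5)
Your proposal is correct and follows essentially the same route as the paper: induction over the strata, invoking Corollary~\ref{thm:stratum_eager_iff_semi} at each step with the inductive hypothesis supplying the identical lower-stratum facts. The only cosmetic difference is your base case ($k=1$, vacuous precondition) versus the paper's ($n=0$, empty program), which changes nothing of substance.
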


\begin{proof}
  It is sufficient to show that eager evaluation and semi-naive evaluation derive the same set of facts for every stratum $P_i$.
  We proceed by induction on $n$.
  In the base case, $n = 0$ and the program is empty; no facts are derived by either eager evaluation or semi-naive evaluation.
  In the inductive case, $n = k + 1$, and we have that eager evaluation has derived exactly the facts semi-naive evaluation has derived for strata $P_1, \dots, P_k$.
  By Corollary~\ref{thm:stratum_eager_iff_semi}, eager and semi-naive evaluation derive the same facts for stratum $P_{k+1}$.
\end{proof}

\fi

\end{document}